\providecommand{\algorithmname}{Algorithm}
\let\oldforeign@language\foreign@language
\DeclareRobustCommand{\foreign@language}[1]{%
	\lowercase{\oldforeign@language{#1}}}
\let\oldforeign@language\foreign@language
\DeclareRobustCommand{\foreign@language}[1]{%
	\lowercase{\oldforeign@language{#1}}}
\newcommand{\MYfooter}{\smash{
		\hfil\parbox[t][\height][t]{\textwidth}{\centering
			\thepage}\hfil\hbox{}}}
\def\ps@IEEEtitlepagestyle{%
	\def\@oddhead{\parbox[t][\height][t]{\textwidth}{\centering \scriptsize
			Personal use of this material is permitted. Permission from the author(s) and/or copyright holder(s), must be obtained for all other uses. Please contact us and provide details if you believe this document breaches copyrights.\\
			\noindent\makebox[\linewidth]{}
		}\hfil\hbox{}}%
	\def\@evenhead{\scriptsize\thepage \hfil \leftmark\mbox{}}%
	\def\@oddfoot{\parbox[t][\height][l]{\textwidth}{
			\vspace{-20pt}{\rule{\textwidth}{0.4pt}}\\ \footnotesize\underline{To cite this article:}
			{\bf{\footnotesize\textcolor{red}{H. A. Hashim "GPS-denied Navigation: Attitude, Position, Linear Velocity, and Gravity Estimation with Nonlinear Stochastic Observer," in Proceedings of the 2021 American Control Conference (ACC), New Orleans, LA, USA, 2021, pp. 1146-1151.}}} doi: \href{https://doi.org/10.23919/ACC50511.2021.9482995}{10.23919/ACC50511.2021.9482995}\\
			\noindent\makebox[\linewidth]
		}\hfil\hbox{}}%
	\def\@evenfoot{\MYfooter}}
\newtheorem{defn}{Definition}
\newtheorem{lem}{Lemma}
\newtheorem{thm}{Theorem}
\newtheorem{assum}{Assumption}
\begin{document}
	\bstctlcite{IEEEexample:BSTcontrol}

	\title{GPS-denied Navigation: Attitude, Position, Linear Velocity, and Gravity Estimation with Nonlinear Stochastic Observer}

\author{Hashim A. Hashim
	\thanks{This work was supported in part by Thompson Rivers University Internal research fund \# 102315.}
	\thanks{Corresponding author, H. A. Hashim is with the Department of Engineering and Applied Science, Thompson Rivers University, Kamloops, British Columbia, Canada, V2C-0C8, e-mail: hhashim@tru.ca}
}



\maketitle

\begin{abstract}
Successful navigation of a rigid-body traveling with six degrees of
freedom (6 DoF) requires accurate estimation of attitude, position,
and linear velocity. The true navigation dynamics are highly nonlinear
and are modeled on the matrix Lie group of $\mathbb{SE}_{2}(3)$.
This paper presents novel geometric nonlinear continuous stochastic
navigation observers on $\mathbb{SE}_{2}(3)$ capturing the true nonlinearity
of the problem. The proposed observers combines IMU and landmark measurements.
It efficiently handles the IMU measurement noise. The proposed observers
are guaranteed to be almost semi-globally uniformly ultimately bounded
in the mean square. Quaternion representation is provided. A real-world
quadrotor measurement dataset is used to validate the effectiveness
of the proposed observers in its discrete form.
\end{abstract}

\begin{IEEEkeywords}
	Inertial navigation, stochastic system, Brownian motion process, stochastic filter algorithm, stochastic differential equation, Lie group, SE(3), SO(3),
 pose estimator, position, attitude, feature measurement, inertial measurement unit, IMU.
\end{IEEEkeywords}

\IEEEpeerreviewmaketitle{}

\section{Introduction}

\IEEEPARstart{A}{utonomous} navigation would be infeasible without robust algorithms
that enable accurate pose (\textit{i.e.} attitude and position) and
velocity estimation of a rigid-body. The challenge of attitude estimation,
an essential component of pose, has been explored extensively over
the past three decades \cite{hashim2019SO3Wiley,crassidis2003unscented,grip2012attitude,hashim2018SO3Stochastic,lee2012exponential}.
Attitude can be defined given at least two observations in the inertial-frame
and their measurements in the body-frame. For instance, a typical
low-cost inertial measurement unit (IMU) module includes a magnetometer
and an accelerometer which supply the two necessary body-frame measurements
and a gyroscope which provides measurements of angular velocity. The
main shortcoming of low-cost sensors is the high level of noise. Multiple
solutions tackle attitude measurement uncertainty in attitude estimation,
namely Gaussian filters \cite{crassidis2003unscented}, nonlinear
deterministic filters on the Special Orthogonal Group $\mathbb{SO}(3)$
\cite{grip2012attitude,lee2012exponential}, and nonlinear stochastic
filters on $\mathbb{SO}(3)$ \cite{hashim2018SO3Stochastic,hashim2019SO3Wiley}.
In contrast, pose estimation requires a vision unit in addition to
an IMU. Initially, Gaussian filters dominated the area of pose estimation.
In the last few years, nonlinear deterministic filters on the Special
Euclidean Group $\mathbb{SE}(3)$ \cite{hashim2019SE3Det,vasconcelos2010nonlinear}
and nonlinear stochastic filters on $\mathbb{SE}(3)$ \cite{hashim2020SE3Stochastic,hashim2020LetterSLAM,hashim2020TITS_SLAM}
have been shown to be more effective. Pose estimation algorithms rely
on measurements of angular and linear velocity \cite{hashim2020SE3Stochastic}.
In practice, linear velocity information is not attainable in case
of 1) a GPS-denied environment and 2) a vehicle equipped with low-cost
sensors.

The true six degrees of freedom (6 DoF) navigation dynamics are a
combination of attitude, position, and linear velocity dynamics. The
dynamics are highly nonlinear, are modeled on the Lie group of $\mathbb{SE}_{2}(3)$,
are neither right nor left invariant, and rely on angular velocity
and acceleration. The navigation problem has been approached using
Gaussian filters, for instance, \cite{batista2011observability}.
Other solutions that attempted mimicking the true navigation dynamics
include a Riccati observer \cite{hua2018riccati} and an invariant
extended Kalman filter (IEKF) on $\mathbb{SE}_{2}(3)$ \cite{barrau2016invariant}. 

Considering the true nature of the navigation dynamics, this paper
proposes novel nonlinear stochastic navigation observers on $\mathbb{SE}_{2}(3)$
that 1) mimics the true navigation dynamics; 2) estimates rigid-body's
attitude, position, linear velocity, and unknown gravity; and 3) relies
on measurements of angular velocity and acceleration. The noise associated
with IMU measurements is successfully handled. The closed loop signals
are guaranteed to be almost semi-globally uniformly ultimately bounded
(SGUUB) in the mean square. The novel solutions is shown to be cost-effective
at a low sampling rate using a real-world dataset. 

The rest of the paper is organized as follows: Section \ref{sec:SE3_Problem-Formulation}
introduces important math notation and the preliminaries, and formulates
the navigation problem in a stochastic sense. Section \ref{sec:Non-Nav-Observer1}
presents a novel nonlinear stochastic navigation observer. Section
\ref{sec:SE3_Simulations} presents the obtained results. Finally,
Section \ref{sec:SE3_Conclusion} summarizes the work.

\section{Problem Formulation\label{sec:SE3_Problem-Formulation}}

\subsection{Preliminaries}

In this paper, sets of real numbers, nonnegative real numbers, and
a real $n$-by-$m$ dimensional space are defined by $\mathbb{R}$,
$\mathbb{R}_{+}$, and $\mathbb{R}^{n\times m}$, respectively. For
$x\in\mathbb{R}^{n}$ and $M\in\mathbb{R}^{n\times m}$, $||x||=\sqrt{x^{\top}x}$
is the Euclidean norm of $x$ and $||M||_{F}=\sqrt{{\rm Tr}\{MM^{*}\}}$
is the Frobenius norm of $M$ with $*$ being a conjugate transpose.
$\mathbf{I}_{n}$ denotes an $n$-by-$n$ identity matrix, while $0_{n\times m}$
and $1_{n\times m}$ denote $n$-by-$m$ dimensional matrices of zeros
and ones, respectively. A set of eigenvalues of $M\in\mathbb{R}^{n\times n}$
is described by $\lambda(M)=\{\lambda_{1},\lambda_{2},\ldots,\lambda_{n}\}$
with $\overline{\lambda}_{M}=\overline{\lambda}(M)$ being the maximum
value and $\underline{\lambda}_{M}=\underline{\lambda}(M)$ being
the minimum value of $\lambda(M)$. $\mathbb{P}\{\cdot\}$ represents
probability and $\mathbb{E}[\cdot]$ denotes an expected value. $\left\{ \mathcal{I}\right\} $
represents fixed inertial-frame and $\left\{ \mathcal{B}\right\} $
denotes fixed body-frame. Rigid-body's attitude is defined by $R\in\mathbb{SO}\left(3\right)$
where $\mathbb{SO}(3)=\{R\in\mathbb{R}^{3\times3}|RR^{\top}=R^{\top}R=\mathbf{I}_{3}\text{, }{\rm det}(R)=+1\}$
with ${\rm det}(\cdot)$ being a determinant. $\mathfrak{so}(3)$
defines the Lie algebra of $\mathbb{SO}(3)$ with $\mathfrak{so}\left(3\right)=\{[x]_{\times}\in\mathbb{R}^{3\times3}|[x]_{\times}^{\top}=-[x]_{\times},x\in\mathbb{R}^{3}\}$.
Note that $[x]_{\times}$ is a skew symmetric matrix such tha{\small{}t
	\begin{align*}
	\left[x\right]_{\times} & =\left[\begin{array}{ccc}
	0 & -x_{3} & x_{2}\\
	x_{3} & 0 & -x_{1}\\
	-x_{2} & x_{1} & 0
	\end{array}\right]\in\mathfrak{so}\left(3\right),\hspace{1em}x=\left[\begin{array}{c}
	x_{1}\\
	x_{2}\\
	x_{3}
	\end{array}\right]
	\end{align*}
}The inverse mapping of $[\cdot]_{\times}$ follows the map $\mathbf{vex}:\mathfrak{so}\left(3\right)\rightarrow\mathbb{R}^{3}$
with $\mathbf{vex}([x]_{\times})=x,\forall x\in\mathbb{R}^{3}$. The
anti-symmetric projection on $\mathfrak{so}\left(3\right)$ is represented
by $\boldsymbol{\mathcal{P}}_{a}(A)=\frac{1}{2}(A-A^{\top})\in\mathfrak{so}\left(3\right),\forall A\in\mathbb{R}^{3\times3}$.
$\boldsymbol{\Upsilon}=\mathbf{vex}\circ\boldsymbol{\mathcal{P}}_{a}$
describes a composition mapping where $\boldsymbol{\Upsilon}(A)=\mathbf{vex}(\boldsymbol{\mathcal{P}}_{a}(A))\in\mathbb{R}^{3},\forall A\in\mathbb{R}^{3\times3}$.
The Euclidean distance of $R\in\mathbb{SO}\left(3\right)$ is described
by
\begin{equation}
||R||_{{\rm I}}={\rm Tr}\{\mathbf{I}_{3}-R\}/4\in\left[0,1\right]\label{eq:NAV_Ecul_Dist}
\end{equation}
where $-1\leq{\rm Tr}\{R\}\leq3$ and $||R||_{{\rm I}}=\frac{1}{8}||\mathbf{I}_{3}-R||_{F}^{2}$,
see \cite{hashim2018SO3Stochastic,hashim2019SO3Wiley}. Consider a
rigid-body navigating in 3D space where its attitude, position, and
velocity are termed $R\in\mathbb{SO}\left(3\right)$, $P\in\mathbb{R}^{3}$,
and $V\in\mathbb{R}^{3}$, respectively, with $R\in\{\mathcal{B}\}$
and $P,V\in\{\mathcal{I}\}$. Let $\mathbb{SE}_{2}\left(3\right)=\mathbb{SO}\left(3\right)\times\mathbb{R}^{3}\times\mathbb{R}^{3}\subset\mathbb{R}^{5\times5}$
be the extended form of $\mathbb{SE}\left(3\right)=\mathbb{SO}\left(3\right)\times\mathbb{R}^{3}\subset\mathbb{R}^{4\times4}$
with
\begin{align}
\mathbb{SE}_{2}\left(3\right) & =\{\left.X\in\mathbb{R}^{5\times5}\right|R\in\mathbb{SO}\left(3\right),P,V\in\mathbb{R}^{3}\}\label{eq:NAV_SE2_3}\\
X=\Psi( & R,P,V)=\left[\begin{array}{ccc}
R & P & V\\
0_{1\times3} & 1 & 0\\
0_{1\times3} & 0 & 1
\end{array}\right]\in\mathbb{SE}_{2}\left(3\right)\label{eq:NAV_X}
\end{align}
$X\in\mathbb{SE}_{2}\left(3\right)$ denotes a homogeneous navigation
matrix which is assumed to be bounded. The tangent space of $\mathbb{SE}_{2}\left(3\right)$
at point $X$ is $T_{X}\mathbb{SE}_{2}\left(3\right)\in\mathbb{R}^{5\times5}$.
Define a submanifold $\mathcal{U}_{\mathcal{M}}=\mathfrak{so}\left(3\right)\times\mathbb{R}^{3}\times\mathbb{R}^{3}\times\mathbb{R}\subset\mathbb{R}^{5\times5}$
such that
\begin{align}
\mathcal{U}_{\mathcal{M}} & =\left\{ \left.u([\Omega\text{\ensuremath{]_{\times}}},V,a,\kappa)\right|[\Omega\text{\ensuremath{]_{\times}}}\in\mathfrak{so}(3),V,a\in\mathbb{R}^{3},\kappa\in\mathbb{R}\right\} \nonumber \\
u( & [\Omega\text{\ensuremath{]_{\times}}},V,a,\kappa)=\left[\begin{array}{ccc}
[\Omega\text{\ensuremath{]_{\times}}} & V & a\\
0_{1\times3} & 0 & 0\\
0_{1\times3} & \kappa & 0
\end{array}\right]\in\mathcal{U}_{\mathcal{M}}\subset\mathbb{R}^{5\times5}\label{eq:NAV_u}
\end{align}
with $\Omega\in\mathbb{R}^{3}$, $V\in\mathbb{R}^{3}$, and $a\in\mathbb{R}^{3}$
being the rigid-body's true angular velocity, linear velocity, and
apparent acceleration composed of all non-gravitational forces affecting
the rigid-body, respectively, where $\Omega,a\in\{\mathcal{B}\}$.

\subsection{Measurements and Dynamics}

\begin{figure}[h]
	\vspace{-0.1cm}
	\centering{}\includegraphics[scale=0.38]{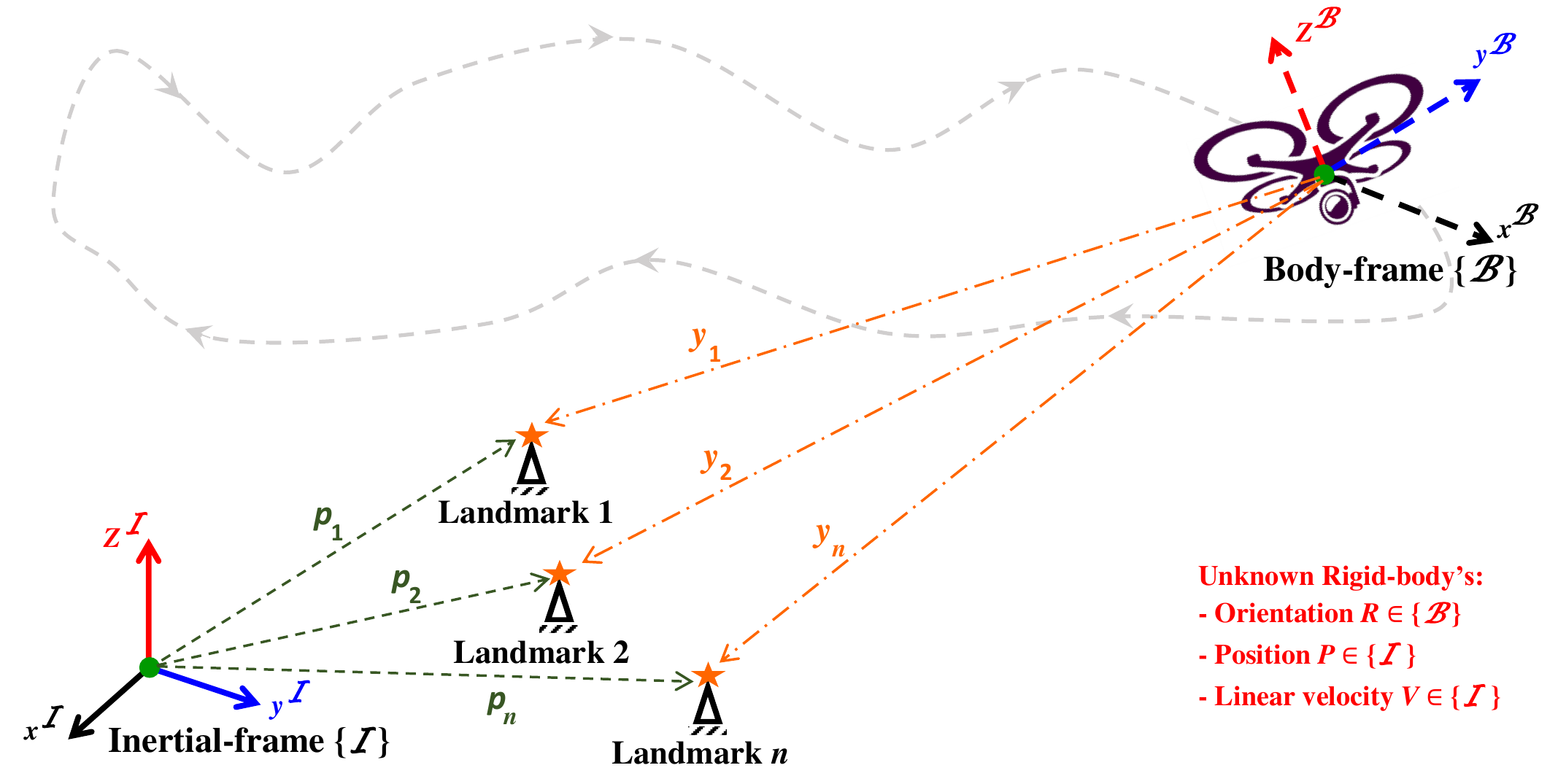}\vspace{-0.1cm}
	\caption{Navigation estimation problem.}
	\label{fig:NAVIGATION}\vspace{-0.1cm}
\end{figure}
The true dynamics of the homogeneous navigation matrix in \eqref{eq:NAV_X}
are as follow:\vspace{-0.1cm}
\begin{equation}
\begin{cases}
\dot{R} & =R\left[\Omega\right]_{\times}\\
\dot{P} & =V\\
\dot{V} & =Ra+\overrightarrow{\mathtt{g}}
\end{cases}\hspace{1em}\equiv\hspace{1em}\underbrace{\dot{X}=XU-\mathcal{\mathcal{G}}X}_{\text{Compact form}}\label{eq:NAV_Detailed_True_dot}
\end{equation}
with $\overrightarrow{\mathtt{g}}$ denoting a gravity vector. The
left portion of \eqref{eq:NAV_Detailed_True_dot} represents the detailed
navigation dynamics, while the right portion is its equivalent compact
form with $X=\left[\begin{array}{ccc}
R & P & V\\
0_{1\times3} & 1 & 0\\
0_{1\times3} & 0 & 1
\end{array}\right]\in\mathbb{SE}_{2}\left(3\right)$, $U=u([\Omega\text{\ensuremath{]_{\times}}},0_{3\times1},a,1)=\left[\begin{array}{ccc}
[\Omega\text{\ensuremath{]_{\times}}} & 0_{3\times1} & a\\
0_{1\times3} & 0 & 0\\
0_{1\times3} & 1 & 0
\end{array}\right]\in\mathcal{U}_{\mathcal{M}}$, and $\mathcal{\mathcal{G}}=u(0_{3\times3},0_{3\times1},-\overrightarrow{\mathtt{g}},1)=\left[\begin{array}{ccc}
0_{3\times3} & 0_{3\times1} & -\overrightarrow{\mathtt{g}}\\
0_{1\times3} & 0 & 0\\
0_{1\times3} & 1 & 0
\end{array}\right]\in\mathcal{U}_{\mathcal{M}}$, visit \eqref{eq:NAV_u}. Note that $\dot{X}:\mathbb{SE}_{2}\left(3\right)\times\mathcal{U}_{\mathcal{M}}\rightarrow T_{X}\mathbb{SE}_{2}\left(3\right)$.
The components of the navigation matrix $X$, namely $R$, $P$, and
$V$, become completely unknown when the vehicle is equipped with
low-cost sensors in a GPS-denied environment. Fig. \ref{fig:NAVIGATION}
depicts the navigation problem. Availability of a set of sensor measurements
enables the estimation of $X$. Consider a group of $n$ landmarks
observed in $\left\{ \mathcal{I}\right\} $ and measured in $\left\{ \mathcal{B}\right\} $
\cite{hashim2020SE3Stochastic,hashim2020LetterSLAM,hashim2020TITS_SLAM}:%
\vspace{-0.1cm}
\begin{align}
\overline{y}_{i} & =X^{-1}\overline{p}_{i}+[(n_{i}^{y})^{\top},0,0]^{\top}\in\mathbb{R}^{5}\nonumber \\
y_{i} & =R^{\top}(p_{i}-P)+n_{i}^{y}\in\mathbb{R}^{3}\label{eq:NAV_Vec_Landmark}
\end{align}
\vspace{-0.1cm}
where $X^{-1}=\Psi(R^{\top},-R^{\top}P,-R^{\top}V)$, $p_{i}\in\left\{ \mathcal{I}\right\} $
denotes the $i$th landmark observation, $y_{i}\in\left\{ \mathcal{B}\right\} $
denotes the $i$th landmark measurement, and $n_{i}^{y}\in\left\{ \mathcal{B}\right\} $
denotes the noise associated with $y_{i}$, $\overline{y}_{i}=[y_{i}^{\top},1,0]^{\top}$,
and $\overline{p}_{i}=[p_{i}^{\top},1,0]^{\top}$. Note that in this
analysis, $n_{i}^{y}$ is assumed to be zero.

\begin{assum}\label{Assum:NAV_1Landmark}The number of non-collinear
	landmarks available for observation and measurement is greater or
	equal to three.\end{assum}

Measurements of $\Omega$ and $a$ are easily obtainable by a low-cost
IMU module:\vspace{-0.1cm}
\begin{equation}
\begin{cases}
\Omega_{m} & =\Omega+n_{\Omega}\in\mathbb{R}^{3}\\
a_{m} & =a+n_{a}\in\mathbb{R}^{3}
\end{cases}\label{eq:NAV_XVelcoity}
\end{equation}
with $n_{\Omega}$ and $n_{a}$ being unknown bounded zero-mean noise.
A derivative of a Gaussian process is a Gaussian process. Therefore,
one can express $n_{\Omega}=\mathcal{Q}d\beta_{\Omega}/dt$ and $n_{a}=\mathcal{Q}d\beta_{a}/dt$
as Brownian motion process vectors \cite{deng2001stabilization,ito1984lectures}
with $\mathcal{Q}={\rm diag}(\mathcal{Q}_{1,1},\mathcal{Q}_{2,2},\mathcal{Q}_{3,3})\in\mathbb{R}^{3\times3}$
being an unknown positive time-variant diagonal matrix. $\mathcal{Q}^{2}=\mathcal{Q}\mathcal{Q}^{\top}$
denotes the covariance of $n_{\Omega}$ and $n_{a}$. For more details
on the Brownian motion properties with regard to the attitude and
pose estimation problems visit \cite{hashim2018SO3Stochastic,hashim2019SO3Wiley,hashim2020SE3Stochastic}.
Hence, the attitude dynamics in \eqref{eq:NAV_Detailed_True_dot}
can be represented in an incremental form as $dR=R[\Omega_{m}]_{\times}dt-R[\mathcal{Q}d\beta_{\Omega}]_{\times}$.
From \eqref{eq:NAV_Ecul_Dist}, the dynamics in \eqref{eq:NAV_Detailed_True_dot}
may be rewritten as a stochastic differential equation:\vspace{-0.1cm}
\begin{equation}
\begin{cases}
d||R||_{{\rm I}} & =(1/2)\mathbf{vex}(\boldsymbol{\mathcal{P}}_{a}(R))^{\top}(\Omega_{m}dt-\mathcal{Q}d\beta_{\Omega})\\
dP & =Vdt\\
dV & =(Ra_{m}+\overrightarrow{\mathtt{g}})dt-R\mathcal{Q}d\beta_{a}
\end{cases}\label{eq:NAV_Detailed_True_STCH_dot}
\end{equation}
where ${\rm Tr}\{R[\Omega_{m}]_{\times}\}={\rm Tr}\{\boldsymbol{\mathcal{P}}_{a}(R)[\Omega_{m}]_{\times}\}=-2\mathbf{vex}(\boldsymbol{\mathcal{P}}_{a}(R))^{\top}\Omega_{m}$,
visit \cite{hashim2019SO3Wiley,hashim2020SE3Stochastic,hashim2021AESCTE}.
In other words, \eqref{eq:NAV_Detailed_True_STCH_dot} can be described
as\vspace{-0.1cm}
\begin{align}
dx & =fdt+h\overline{\mathcal{Q}}d\beta\label{eq:NAV_Detailed_True_STCH_dot-1}
\end{align}
where $x=[||R||_{{\rm I}},P^{\top},V^{\top}]^{\top}\in\mathbb{R}^{7}$,
$f=[(1/2)\mathbf{vex}(\boldsymbol{\mathcal{P}}_{a}(R))^{\top}\Omega_{m},V^{\top},(Ra_{m}+\overrightarrow{\mathtt{g}})^{\top}]^{\top}\in\mathbb{R}^{7}$,
$h\in\mathbb{R}^{7\times9}$, $\overline{\mathcal{Q}}d\beta=[d\beta_{\Omega}^{\top}\mathcal{Q},0_{3\times1}^{\top},d\beta_{a}^{\top}\mathcal{Q}]^{\top}\in\mathbb{R}^{9}$,
$\overline{\mathcal{Q}}={\rm diag}(\mathcal{Q},\mathcal{Q},\mathcal{Q})\in\mathbb{R}^{9\times9}$,
and $\beta=[\beta_{\Omega}^{\top},0_{3\times1}^{\top},\beta_{a}^{\top}]^{\top}\in\mathbb{R}^{9}$.
Note that ${\rm diag}$ denotes a diagonal matrix. With the aim of
achieving adaptive stabilization, define\vspace{-0.1cm}
\begin{equation}
\sigma=[\sup_{t\geq0}\mathcal{Q}_{1,1},\sup_{t\geq0}\mathcal{Q}_{2,2},\sup_{t\geq0}\mathcal{Q}_{3,3}]^{\top}\in\mathbb{R}^{3}\label{eq:NAV_s}
\end{equation}

\begin{defn}
	\label{def:NAV_SGUUB}\cite{hashim2018SO3Stochastic,hashim2020SE3Stochastic,ji2006adaptive,hashim2021AESCTE}
	For the stochastic dynamics in \eqref{eq:NAV_Detailed_True_STCH_dot-1},
	$x(t)$ is almost SGUUB if for a known set $\varDelta\in\mathbb{R}^{7}$
	and $x(t_{0})$ there is a constant $c>0$ and a time constant $\tau_{c}=\tau_{c}(\kappa,x(t_{0}))$
	with $\mathbb{E}[||x(t_{0})||]<c,\forall t>t_{0}+c$.%
\end{defn}
\begin{lem}
	\label{Lemm:NAV_deng}\cite{deng2001stabilization} Consider the stochastic
	system in \eqref{eq:NAV_Detailed_True_STCH_dot-1} and suppose that
	$\mathbb{V}(x)$ is a twice differentiable cost function with $\mathbb{V}:\mathbb{R}^{7}\rightarrow\mathbb{R}_{+}$
	such that
	\begin{equation}
	\mathcal{L}\mathbb{V}(x)=\mathbb{V}_{x}^{\top}f+\frac{1}{2}{\rm Tr}\{h\overline{\mathcal{Q}}^{2}h^{\top}\mathbb{V}_{xx}\}\label{eq:NAV_Vfunction_Lyap0}
	\end{equation}
	where $\mathcal{L}\mathbb{V}(x)$ denotes a differential operator,
	$\mathbb{V}_{x}=\partial\mathbb{V}/\partial x$ and $\mathbb{V}_{xx}=\partial^{2}\mathbb{V}/\partial x^{2}$.
	Define $\varpi_{1}(\cdot)$ and $\varpi_{2}(\cdot)$ as class $\mathcal{K}_{\infty}$
	functions and let constants $\eta_{1}>0$ and $\eta_{2}\geq0$ such
	that
	\begin{align}
	& \hspace{1em}\varpi_{1}(x)\leq\mathbb{V}(x)\leq\varpi_{2}(x)\label{eq:NAV_Vfunction_Lyap}\\
	\mathcal{L}\mathbb{V}(x)= & \mathbb{V}_{x}^{\top}f+\frac{1}{2}{\rm Tr}\{h\overline{\mathcal{Q}}^{2}h^{\top}\mathbb{V}_{xx}\}\leq-\eta_{1}\mathbb{V}(x)+\eta_{2}\label{eq:NAV_dVfunction_Lyap}
	\end{align}
	Thus, the stochastic differential system in \eqref{eq:NAV_Detailed_True_STCH_dot}
	has almost a unique strong solution on $[0,\infty)$. Additionally,
	the solution $x$ is bounded in probability satisfying
	\begin{equation}
	\mathbb{E}[\mathbb{V}(x)]\leq\mathbb{V}(x(0)){\rm exp}(-\eta_{1}t)+\eta_{2}/\eta_{1}\label{eq:NAV_EVfunction_Lyap}
	\end{equation}
	Furthermore, the inequality in \eqref{eq:NAV_EVfunction_Lyap} shows
	that $x$ is SGUUB in the mean square. 
\end{lem}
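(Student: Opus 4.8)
The plan is to track the evolution of the cost $\mathbb{V}(x(t))$ along the diffusion \eqref{eq:NAV_Detailed_True_STCH_dot-1} and thereby reduce the statement to a scalar differential inequality that can be integrated explicitly. First I would apply It\^{o}'s rule to the twice-differentiable map $\mathbb{V}$: since $dx=f\,dt+h\overline{\mathcal{Q}}\,d\beta$ and $\overline{\mathcal{Q}}$ is symmetric, the quadratic-variation term contributes $\frac{1}{2}{\rm Tr}\{h\overline{\mathcal{Q}}^{2}h^{\top}\mathbb{V}_{xx}\}\,dt$, so that
\[
d\mathbb{V}(x)=\mathcal{L}\mathbb{V}(x)\,dt+\mathbb{V}_{x}^{\top}h\overline{\mathcal{Q}}\,d\beta,
\]
where $\mathcal{L}\mathbb{V}$ is exactly the second-order operator in \eqref{eq:NAV_Vfunction_Lyap0}. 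Integrating in time, taking expectations, discarding the It\^{o}-integral term (the localization argument below makes this step rigorous), and invoking hypothesis \eqref{eq:NAV_dVfunction_Lyap} yields $\frac{d}{dt}\mathbb{E}[\mathbb{V}(x)]\leq-\eta_{1}\mathbb{E}[\mathbb{V}(x)]+\eta_{2}$. Multiplying by the integrating factor ${\rm exp}(\eta_{1}t)$ and integrating once more gives
\[
\mathbb{E}[\mathbb{V}(x(t))]\leq\mathbb{V}(x(0)){\rm exp}(-\eta_{1}t)+\frac{\eta_{2}}{\eta_{1}}\bigl(1-{\rm exp}(-\eta_{1}t)\bigr)\leq\mathbb{V}(x(0)){\rm exp}(-\eta_{1}t)+\frac{\eta_{2}}{\eta_{1}},
\]
which is precisely \eqref{eq:NAV_EVfunction_Lyap}.

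The step I expect to be the main obstacle is making the expectation manipulation rigorous while simultaneously guaranteeing that a solution exists on all of $[0,\infty)$: a priori $\mathbb{V}_{x}^{\top}h\overline{\mathcal{Q}}$ need not be square-integrable on $[0,t]$, so the It\^{o} integral is only a local martingale, and the solution could in principle explode in finite time. I would address both at once through a Khasminskii-type localization. Introduce the exit times $\tau_{n}=\inf\{t\geq0:||x(t)||\geq n\}$; applying It\^{o} on $[0,t\wedge\tau_{n}]$, where the stopped stochastic integral is a genuine zero-mean martingale, produces $\mathbb{E}[\mathbb{V}(x(t\wedge\tau_{n}))]\leq\mathbb{V}(x(0)){\rm exp}(-\eta_{1}t)+\eta_{2}/\eta_{1}$. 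Using the radially unbounded lower bound $\varpi_{1}(||x||)\leq\mathbb{V}(x)$ from \eqref{eq:NAV_Vfunction_Lyap} with Chebyshev's inequality gives $\mathbb{P}\{\tau_{n}\leq t\}\leq\bigl(\mathbb{V}(x(0))+\eta_{2}/\eta_{1}\bigr)/\varpi_{1}(n)\to0$ as $n\to\infty$, hence $\tau_{n}\to\infty$ almost surely and there is no finite-time explosion. Combined with the standing measurability and local Lipschitz regularity of $f$ and $h$, this furnishes the unique strong solution on $[0,\infty)$, and Fatou's lemma carries the bound through the limit $n\to\infty$ to recover \eqref{eq:NAV_EVfunction_Lyap} for the actual solution.

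Finally, to conclude that $x$ is almost SGUUB in the mean square in the sense of Definition~\ref{def:NAV_SGUUB}, I would feed the left inequality of \eqref{eq:NAV_Vfunction_Lyap} into \eqref{eq:NAV_EVfunction_Lyap}: this gives $\mathbb{E}[\varpi_{1}(||x(t)||)]\leq\mathbb{V}(x(0)){\rm exp}(-\eta_{1}t)+\eta_{2}/\eta_{1}$, and since $\varpi_{1}\in\mathcal{K}_{\infty}$ is invertible, any prescribed tolerance translates into a ball for $||x(t)||$ that is reached after a finite settling time; for quadratic-type bounds $\varpi_{1}(s)\geq\underline{c}\,s^{2}$ (as arise here, each of $||R||_{{\rm I}}$, $||P||$ and $||V||$ entering $\mathbb{V}$ through a positive-definite quadratic form) this reads $\mathbb{E}[||x(t)||^{2}]\leq\gamma_{1}{\rm exp}(-\eta_{1}t)+\gamma_{2}$ with $\gamma_{1},\gamma_{2}>0$. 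Hence the transient decays exponentially at rate $\eta_{1}$ while $\mathbb{E}[||x(t)||^{2}]$ is ultimately confined to a residual set whose size is controlled by $\eta_{2}/\eta_{1}$; taking the constant $c$ of Definition~\ref{def:NAV_SGUUB} larger than this residual size and the settling time $\tau_{c}$ from the exponential rate gives exactly the claimed ultimate-boundedness property, the dependence of $\gamma_{1}$ (hence $\tau_{c}$) on $x(0)$ being precisely why the bound is only semi-global.
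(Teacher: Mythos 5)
The paper does not prove this lemma; it is quoted as a known result from the cited reference \cite{deng2001stabilization}, so there is no in-paper argument to compare against. Your proof is the standard one for results of this type---It\^{o}'s formula to obtain $d\mathbb{V}=\mathcal{L}\mathbb{V}\,dt+\mathbb{V}_{x}^{\top}h\overline{\mathcal{Q}}\,d\beta$, Khasminskii localization via the exit times $\tau_{n}$ combined with the radial unboundedness of $\varpi_{1}$ and Chebyshev's inequality to exclude finite-time explosion and establish the strong solution on $[0,\infty)$, and Fatou's lemma to carry the bound to the unstopped process---and it is essentially the argument of the cited reference; it is sound. The only cosmetic imprecision is the intermediate stopped estimate: to obtain the exponential form one applies It\^{o} to ${\rm exp}(\eta_{1}s)\mathbb{V}(x(s))$ on $[0,t\wedge\tau_{n}]$, which yields $\mathbb{E}[{\rm exp}(\eta_{1}(t\wedge\tau_{n}))\mathbb{V}(x(t\wedge\tau_{n}))]\leq\mathbb{V}(x(0))+(\eta_{2}/\eta_{1})({\rm exp}(\eta_{1}t)-1)$ rather than the bound you wrote for $\mathbb{E}[\mathbb{V}(x(t\wedge\tau_{n}))]$; since ${\rm exp}(\eta_{1}(t\wedge\tau_{n}))\geq1$ this still drives the Chebyshev step, and \eqref{eq:NAV_EVfunction_Lyap} is recovered in the stated form after letting $n\rightarrow\infty$.
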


\subsection{Estimates, Error, and Measurements Setup\label{subsec:Navigation-Matrix}}

Consider $\hat{\sigma}$ to be the estimate of $\sigma$ described
in \eqref{eq:NAV_s}. Let the covariance error be
\[
\tilde{\sigma}=\sigma-\hat{\sigma}\in\mathbb{R}^{3}
\]
Let the estimate of $X\in\mathbb{SE}_{2}\left(3\right)$ in \eqref{eq:NAV_X}
be
\begin{equation}
\hat{X}=\Psi(\hat{R},\hat{P},\hat{V})=\left[\begin{array}{ccc}
\hat{R} & \hat{P} & \hat{V}\\
0_{1\times3} & 1 & 0\\
0_{1\times3} & 0 & 1
\end{array}\right]\in\mathbb{SE}_{2}\left(3\right)\label{eq:NAV_X_est}
\end{equation}
where $\hat{R}\in\mathbb{SO}\left(3\right)$, $\hat{P}\in\mathbb{R}^{3}$,
and $\hat{V}\in\mathbb{R}^{3}$ refer to the estimates of $R$, $P$,
and $V$, respectively. Define the error between $X$ and $\hat{X}$
as
\begin{align*}
\tilde{X}=X\hat{X}^{-1} & =\left[\begin{array}{ccc}
\tilde{R} & \tilde{P} & \tilde{V}\\
0_{1\times3} & 1 & 0\\
0_{1\times3} & 0 & 1
\end{array}\right]\in\mathbb{SE}_{2}\left(3\right)
\end{align*}
such that $\hat{X}^{-1}=\Psi(\hat{R}^{\top},-\hat{R}^{\top}\hat{P},-\hat{R}^{\top}\hat{V})$,
$\tilde{R}=R\hat{R}^{\top}$, $\tilde{P}=P-\tilde{R}\hat{P}$, and
$\tilde{V}=V-\tilde{R}\hat{V}$. The overarching objective of driving
$X\rightarrow\hat{X}$ means that $\tilde{X}\rightarrow\mathbf{I}_{5}$,
$\tilde{R}\rightarrow\mathbf{I}_{3}$, $\tilde{P}\rightarrow0_{3\times1}$,
and $\tilde{V}\rightarrow0_{3\times1}$. Let us define the error as
\begin{align*}
\overset{\circ}{\tilde{y}}_{i} & =\overline{p}_{i}-\tilde{X}^{-1}\overline{p}_{i}=\overline{p}_{i}-\hat{X}\overline{y}_{i}=[(p_{i}-\hat{R}y_{i}-\hat{P})^{\top},0,0]^{\top}
\end{align*}
where $\overset{\circ}{\tilde{y}}_{i}=[\tilde{y}_{i}^{\top},0,0]^{\top}$,
$p_{i}-\hat{R}y_{i}-\hat{P}=\tilde{p}_{i}-\tilde{P}$, $\tilde{p}_{i}=\hat{p}_{i}-\tilde{R}p_{i}$,
and $\tilde{P}=\hat{P}-\tilde{R}P$. Let $s_{i}>0$ be the sensor
confidence level of the $i$th measurement. Define the following elements
in the context of available vector measurements:{\small{}
	\begin{equation}
	\begin{cases}
	p_{c} & =\frac{1}{s_{T}}\sum_{i=1}^{n}s_{i}p_{i},\hspace{1em}s_{T}=\sum_{i=1}^{n}s_{i}\\
	M & =\sum_{i=1}^{n}s_{i}(p_{i}-p_{c})(p_{i}-p_{c})^{\top}\\
	& =\sum_{i=1}^{n}s_{i}p_{i}p_{i}^{\top}-s_{T}p_{c}p_{c}^{\top}\\
	\tilde{R}^{\top}\tilde{P}_{\varepsilon} & =\sum_{i=1}^{n}s_{i}\tilde{y}_{i}=\frac{1}{s_{T}}\sum_{i=1}^{n}s_{i}(p_{i}-\hat{R}y_{i}-\hat{P})\\
	M\tilde{R} & =\sum_{i=1}^{n}s_{i}(p_{i}-p_{c})(p_{i}-P)^{\top}\tilde{R}\\
	& =\sum_{i=1}^{n}s_{i}(p_{i}-p_{c})y_{i}^{\top}\hat{R}^{\top}
	\end{cases}\label{eq:NAV_Set_Measurements}
	\end{equation}
}It can be deduced that $\tilde{R}\rightarrow\mathbf{I}_{3}$ indicates
that $\tilde{P}_{\varepsilon}\rightarrow\tilde{P}$.

\begin{lem}
	\label{Lemm:SLAM_Lemma1}Let $\tilde{R}\in\mathbb{SO}\left(3\right)$
	and $M=M^{\top}\in\mathbb{R}^{3\times3}$ as in \eqref{eq:NAV_Set_Measurements}.
	Consider $\overline{\mathbf{M}}={\rm Tr}\{M\}\mathbf{I}_{3}-M$ where
	$\underline{\lambda}_{\overline{\mathbf{M}}}$ and $\overline{\lambda}_{\overline{\mathbf{M}}}$
	denote the minimum and the maximum eigenvalues of $\overline{\mathbf{M}}$,
	respectively. Since $||M\tilde{R}||_{{\rm I}}=\frac{1}{4}{\rm Tr}\{M(\mathbf{I}_{3}-\tilde{R})\}$
	and $\boldsymbol{\Upsilon}(M\tilde{R})=\mathbf{vex}(\boldsymbol{\mathcal{P}}_{a}(M\tilde{R}))$,
	the{\small{}n
		\begin{align}
		\frac{\underline{\lambda}_{\overline{\mathbf{M}}}}{2}(1+{\rm Tr}\{\tilde{R}\})||M\tilde{R}||_{{\rm I}} & \leq||\boldsymbol{\Upsilon}(M\tilde{R})||^{2}\leq2\overline{\lambda}_{\overline{\mathbf{M}}}||M\tilde{R}||_{{\rm I}}\label{eq:SLAM_lemm1_1}
		\end{align}
	}\begin{proof}See (\cite{hashim2019SE3Det}, Lemma 1).\end{proof}
\end{lem}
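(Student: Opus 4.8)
\emph{Proof strategy.} The plan is to exploit the axis--angle (Rodrigues) representation of $\tilde{R}\in\mathbb{SO}(3)$ so that both sides of \eqref{eq:SLAM_lemm1_1} reduce to elementary functions of the unit rotation axis $u\in\mathbb{R}^{3}$ (with $u^{\top}u=1$) and the rotation angle $\theta$. Writing $\tilde{R}=\mathbf{I}_{3}+\sin\theta\,[u]_{\times}+(1-\cos\theta)[u]_{\times}^{2}$ and using $[u]_{\times}^{2}=uu^{\top}-\mathbf{I}_{3}$, I would first note that ${\rm Tr}\{M[u]_{\times}\}=0$ (since $M$ is symmetric and $[u]_{\times}$ skew) and ${\rm Tr}\{M[u]_{\times}^{2}\}=u^{\top}Mu-{\rm Tr}\{M\}$, so that
\[
\|M\tilde{R}\|_{{\rm I}}=\tfrac{1}{4}{\rm Tr}\{M(\mathbf{I}_{3}-\tilde{R})\}=\tfrac{1}{4}(1-\cos\theta)\,u^{\top}\overline{\mathbf{M}}u ,
\]
since ${\rm Tr}\{M\}-u^{\top}Mu=u^{\top}\overline{\mathbf{M}}u$. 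Moreover, because ${\rm Tr}\{\tilde{R}\}=1+2\cos\theta$, the weight on the left of \eqref{eq:SLAM_lemm1_1} satisfies $1+{\rm Tr}\{\tilde{R}\}=2(1+\cos\theta)$.

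Next I would obtain a closed form for $\boldsymbol{\Upsilon}(M\tilde{R})=\mathbf{vex}(\boldsymbol{\mathcal{P}}_{a}(M\tilde{R}))$. From $\boldsymbol{\mathcal{P}}_{a}(M\tilde{R})=\tfrac{1}{2}(M\tilde{R}-\tilde{R}^{\top}M)$, Rodrigues' expansion, the standard identity $[v]_{\times}A+A[v]_{\times}=[({\rm Tr}\{A\}\mathbf{I}_{3}-A)v]_{\times}$ valid for symmetric $A$, and $\mathbf{vex}(ab^{\top}-ba^{\top})=-a\times b$, I would arrive at
\[
\boldsymbol{\Upsilon}(M\tilde{R})=\tfrac{1}{2}\sin\theta\,\overline{\mathbf{M}}u+\tfrac{1}{2}(1-\cos\theta)\,[u]_{\times}Mu .
\]
The two summands are orthogonal because $(\overline{\mathbf{M}}u)^{\top}[u]_{\times}Mu=-(Mu)^{\top}[u]_{\times}(Mu)=0$ (as $[u]_{\times}$ is skew and $u^{\top}[u]_{\times}=0$). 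Hence, with $-[u]_{\times}^{2}=\mathbf{I}_{3}-uu^{\top}$,
\[
\|\boldsymbol{\Upsilon}(M\tilde{R})\|^{2}=\tfrac{1}{4}\sin^{2}\theta\,\|\overline{\mathbf{M}}u\|^{2}+\tfrac{1}{4}(1-\cos\theta)^{2}\big(\|Mu\|^{2}-(u^{\top}Mu)^{2}\big).
\]

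For the lower bound I would discard the second (nonnegative) summand and use $\sin^{2}\theta=(1-\cos\theta)(1+\cos\theta)$ together with $\|\overline{\mathbf{M}}u\|^{2}=u^{\top}\overline{\mathbf{M}}^{2}u\geq\underline{\lambda}_{\overline{\mathbf{M}}}\,u^{\top}\overline{\mathbf{M}}u$, which holds since $\overline{\mathbf{M}}$ is symmetric positive semi-definite (its eigenvalues are the pairwise sums of the eigenvalues of the PSD matrix $M$); this reproduces $\tfrac{1}{2}\underline{\lambda}_{\overline{\mathbf{M}}}(1+{\rm Tr}\{\tilde{R}\})\|M\tilde{R}\|_{{\rm I}}$ exactly. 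For the upper bound I would apply $\|\overline{\mathbf{M}}u\|^{2}\leq\overline{\lambda}_{\overline{\mathbf{M}}}\,u^{\top}\overline{\mathbf{M}}u$ to the first summand and $\|Mu\|^{2}-(u^{\top}Mu)^{2}\leq\overline{\lambda}_{\overline{\mathbf{M}}}\,u^{\top}\overline{\mathbf{M}}u$ to the second; the algebraic identity $\sin^{2}\theta+(1-\cos\theta)^{2}=(1-\cos\theta)\big[(1+\cos\theta)+(1-\cos\theta)\big]=2(1-\cos\theta)$ then collapses the sum to $2\overline{\lambda}_{\overline{\mathbf{M}}}\|M\tilde{R}\|_{{\rm I}}$. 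The case $\tilde{R}=\mathbf{I}_{3}$ ($\theta=0$) is trivial since both sides vanish.

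I expect the main obstacle to be the estimate $\|Mu\|^{2}-(u^{\top}Mu)^{2}\leq\overline{\lambda}_{\overline{\mathbf{M}}}\,u^{\top}\overline{\mathbf{M}}u$ used in the upper bound, since its left-hand side is not obviously comparable to $u^{\top}\overline{\mathbf{M}}u$. To handle it I would diagonalize $M={\rm diag}(m_{1},m_{2},m_{3})$ with $m_{i}\geq0$ and set $u=(u_{1},u_{2},u_{3})$ with $\sum_{i}u_{i}^{2}=1$; a weighted-variance identity then gives $\|Mu\|^{2}-(u^{\top}Mu)^{2}=\sum_{i<j}u_{i}^{2}u_{j}^{2}(m_{i}-m_{j})^{2}$, while $u^{\top}\overline{\mathbf{M}}u=\sum_{i<j}(m_{i}u_{j}^{2}+m_{j}u_{i}^{2})$. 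A term-by-term comparison using $(m_{i}-m_{j})^{2}\leq m_{i}^{2}+m_{j}^{2}$ (valid because $m_{i}m_{j}\geq0$), $u_{i}^{2}\leq1$, and $m_{i}\leq\overline{\lambda}_{\overline{\mathbf{M}}}$ (since $\overline{\lambda}_{\overline{\mathbf{M}}}={\rm Tr}\{M\}-\underline{\lambda}_{M}$ equals the sum of the two largest eigenvalues of $M$) then yields the claimed estimate and completes the proof.
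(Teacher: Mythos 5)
Your proof is correct, and every step checks out: the reduction $\|M\tilde{R}\|_{\rm I}=\tfrac14(1-\cos\theta)\,u^{\top}\overline{\mathbf{M}}u$, the closed form $\boldsymbol{\Upsilon}(M\tilde{R})=\tfrac12\sin\theta\,\overline{\mathbf{M}}u+\tfrac12(1-\cos\theta)[u]_{\times}Mu$ with the orthogonality of the two summands, the eigenvalue bounds $\underline{\lambda}_{\overline{\mathbf{M}}}\,u^{\top}\overline{\mathbf{M}}u\leq u^{\top}\overline{\mathbf{M}}^{2}u\leq\overline{\lambda}_{\overline{\mathbf{M}}}\,u^{\top}\overline{\mathbf{M}}u$ (valid because $\overline{\mathbf{M}}\succeq0$), and the Lagrange-identity computation $\|Mu\|^{2}-(u^{\top}Mu)^{2}=\sum_{i<j}u_{i}^{2}u_{j}^{2}(m_{i}-m_{j})^{2}$ followed by the term-by-term estimate against $\overline{\lambda}_{\overline{\mathbf{M}}}(m_{i}u_{j}^{2}+m_{j}u_{i}^{2})$, which correctly uses $m_{i}m_{j}\geq0$ and $m_{i}\leq\overline{\lambda}_{\overline{\mathbf{M}}}$ (both guaranteed because $M=\sum_{i}s_{i}(p_{i}-p_{c})(p_{i}-p_{c})^{\top}$ is positive semi-definite). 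Note, however, that the paper does not actually prove this lemma: it simply cites Lemma~1 of \cite{hashim2019SE3Det}. So your argument is not an alternative to a proof in this paper but a self-contained replacement for an outsourced one; the axis--angle (Rodrigues) route you take is the standard and essentially the only natural way to get bounds of this form, and is the technique used in that line of references. What your write-up adds beyond the citation is an explicit, verifiable treatment of the one genuinely nontrivial step --- bounding the "off-axis" term $\|Mu\|^{2}-(u^{\top}Mu)^{2}$ by $\overline{\lambda}_{\overline{\mathbf{M}}}\,u^{\top}\overline{\mathbf{M}}u$ --- which is exactly where a reader would otherwise have to trust the external reference.
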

In view of Lemma \ref{Lemm:SLAM_Lemma1}, let $\lambda(M)=\{\lambda_{1},\lambda_{2},\lambda_{3}\}$
with $\lambda_{3}\geq\lambda_{2}\geq\lambda_{1}$. According to Assumption
\ref{Assum:NAV_1Landmark}, all the eigenvalues of $\lambda(M)$ are
nonnegative, and at least $\lambda_{2}$ and $\lambda_{3}$ are greater
than zero. As such, (\cite{bullo2004geometric} page. 553): 1) $\overline{\mathbf{M}}$
is positive-definite, and 2) $\lambda(\overline{\mathbf{M}})=\{\lambda_{3}+\lambda_{2},\lambda_{3}+\lambda_{1},\lambda_{2}+\lambda_{1}\}$
such that $\underline{\lambda}_{\overline{\mathbf{M}}}=\lambda_{2}+\lambda_{1}>0$.
\begin{defn}
	\label{def:Unstable-set}Define an unstable set $\mathbb{U}_{s}\subset\mathbb{SO}\left(3\right)$
	as
	\begin{equation}
	\mathbb{U}_{s}=\left\{ \left.\tilde{R}(0)\in\mathbb{SO}\left(3\right)\right|{\rm Tr}\{\tilde{R}(0)\}=-1\right\} \label{eq:Unstable_set_SO3}
	\end{equation}
	where $\tilde{R}(0)\in\mathbb{U}_{s}$ if one of the following conditions
	is met: $\tilde{R}(0)={\rm diag}(1,-1,-1)$, $\tilde{R}(0)={\rm diag}(-1,1,-1)$,
	or $\tilde{R}(0)={\rm diag}(-1,-1,1)$ which indicates that $||\tilde{R}(0)||_{{\rm I}}=+1$.
\end{defn}

\section{Nonlinear Stochastic Navigation Observer\label{sec:Non-Nav-Observer1}}

In view of the vector measurements in \eqref{eq:NAV_Set_Measurements},
and the true compact dynamics defined in \eqref{eq:NAV_Detailed_True_dot},
we propose the following nonlinear stochastic navigation observer
with known gravity developed on the matrix Lie Group of $\mathbb{SE}_{2}\left(3\right)$
and compactly expressed as:
\begin{equation}
\dot{\hat{X}}=\hat{X}U_{m}-W\hat{X}\label{eq:NAV_Filter1}
\end{equation}
with $U_{m}=u([\Omega_{m}\text{\ensuremath{]_{\times}}},0_{3\times1},a_{m},1)=\left[\begin{array}{ccc}
[\Omega_{m}\text{\ensuremath{]_{\times}}} & 0_{3\times1} & a_{m}\\
0_{1\times3} & 0 & 0\\
0_{1\times3} & 1 & 0
\end{array}\right]\in\mathcal{U}_{\mathcal{M}}$, $\hat{X}\in\mathbb{SE}_{2}\left(3\right)$ being the estimate of
$X$, and $W=u([w_{\Omega}\text{\ensuremath{]_{\times}}},w_{V},w_{a},1)=\left[\begin{array}{ccc}
[w_{\Omega}\text{\ensuremath{]_{\times}}} & w_{V} & w_{a}\\
0_{1\times3} & 0 & 0\\
0_{1\times3} & 1 & 0
\end{array}\right]\in\mathcal{U}_{\mathcal{M}}$ being a matrix composed of correction factors. It becomes apparent
that $\dot{\hat{X}}:\mathbb{SE}_{2}\left(3\right)\times\mathcal{U}_{\mathcal{M}}\rightarrow T_{\hat{X}}\mathbb{SE}_{2}\left(3\right)\subset\mathbb{R}^{5\times5}$.
The observer in \eqref{eq:NAV_Filter1} can be detailed as follows:
\begin{equation}
\begin{cases}
\dot{\hat{R}} & =\hat{R}[\Omega_{m}\text{\ensuremath{]_{\times}}}-[w_{\Omega}]_{\times}\hat{R}\\
\dot{\hat{P}} & =\hat{V}-[w_{\Omega}]_{\times}\hat{P}-w_{V}\\
\dot{\hat{V}} & =\hat{R}a_{m}-[w_{\Omega}]_{\times}\hat{V}-w_{a}\\
w_{\Omega} & =-k_{w}(||M\tilde{R}||_{{\rm I}}+1)\boldsymbol{\Upsilon}(M\tilde{R})\\
& \hspace{1em}-\frac{1}{4}\frac{||M\tilde{R}||_{{\rm I}}+2}{||M\tilde{R}||_{{\rm I}}+1}\hat{R}{\rm diag}(\hat{R}^{\top}\boldsymbol{\Upsilon}(M\tilde{R}))\hat{\sigma}\\
w_{V} & =[p_{c}]_{\times}w_{\Omega}-k_{v}\tilde{R}^{\top}\tilde{P}_{\varepsilon}\\
w_{a} & =-\overrightarrow{\mathtt{g}}-k_{a}\tilde{R}^{\top}\tilde{P}_{\varepsilon}\\
k_{R} & =\gamma_{\sigma}\frac{||M\tilde{R}||_{{\rm I}}+2}{8}\exp(||M\tilde{R}||_{{\rm I}})\\
\dot{\hat{\sigma}}_{\Omega} & =k_{R}{\rm diag}(\hat{R}^{\top}\boldsymbol{\Upsilon}(M\tilde{R}))\hat{R}^{\top}\boldsymbol{\Upsilon}(M\tilde{R})-k_{\sigma}\gamma_{\sigma}\hat{\sigma}
\end{cases}\label{eq:NAV_Filter1_Detailed}
\end{equation}
with $k_{w}$, $k_{v}$, $k_{a}$, $\gamma_{\sigma}$, and $k_{\sigma}$
being positive constants. Quaternion representation of \eqref{eq:NAV_Filter1_Detailed}
is presented in the \nameref{subsec:Appendix-A}.
\begin{thm}
	\label{thm:Theorem1}Consider the stochastic system in \eqref{eq:NAV_Detailed_True_STCH_dot}.
	Let Assumption \ref{Assum:NAV_1Landmark} hold true. Let the nonlinear
	navigation stochastic observer in \eqref{eq:NAV_Filter1} be combined
	with the set of measurements in \eqref{eq:NAV_Set_Measurements} along
	with $\overline{y}_{i}=X^{-1}\overline{p}_{i}$, $\Omega_{m}=\Omega+n_{\Omega}$,
	and $a_{m}=a+n_{a}$. Hence, for $\tilde{R}(0)\notin\mathbb{U}_{s}$
	defined in \eqref{eq:Unstable_set_SO3}, all the signals in the closed-loop
	are almost semi-globally uniformly ultimately bounded in the mean
	square.
\end{thm}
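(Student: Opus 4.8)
The plan is to recast the closed-loop estimation error as an It\^o stochastic system, construct a single Lyapunov function spanning attitude, position, velocity and the covariance-estimate error, bound its generator in the form $\mathcal{L}\mathbb{V}\le-\eta_{1}\mathbb{V}+\eta_{2}$, and conclude with Lemma~\ref{Lemm:NAV_deng}. First I would differentiate $\tilde{X}=X\hat{X}^{-1}$ using the true dynamics \eqref{eq:NAV_Detailed_True_dot} and the observer \eqref{eq:NAV_Filter1}, inserting the IMU noise models $n_{\Omega}=\mathcal{Q}\,d\beta_{\Omega}/dt$, $n_{a}=\mathcal{Q}\,d\beta_{a}/dt$. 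The $[w_{\Omega}]_{\times}$ terms generated by $\dot{\tilde{R}}\hat{P}$ and $\dot{\tilde{R}}\hat{V}$ cancel those inside $\dot{\hat{P}},\dot{\hat{V}}$; using $\tilde{P}_{\varepsilon}=\tilde{P}+(\tilde{R}-\mathbf{I}_{3})p_{c}$ (which follows from \eqref{eq:NAV_Set_Measurements} together with $\overline{y}_{i}=X^{-1}\overline{p}_{i}$) and the $[p_{c}]_{\times}w_{\Omega}$ term inside $w_{V}$ then removes the remaining attitude-coupled drift from the position channel. This leaves, schematically, $d\tilde{R}=\tilde{R}\big([w_{\Omega}]_{\times}\,dt-[\hat{R}\mathcal{Q}\,d\beta_{\Omega}]_{\times}\big)$, $d\tilde{P}_{\varepsilon}=(\tilde{V}-k_{v}\tilde{P}_{\varepsilon})\,dt+h_{P}\,d\beta$, $d\tilde{V}=\big((\mathbf{I}_{3}-\tilde{R})\overrightarrow{\mathtt{g}}-k_{a}\tilde{P}_{\varepsilon}\big)\,dt+h_{V}\,d\beta$, and $\dot{\tilde{\sigma}}=-\dot{\hat{\sigma}}_{\Omega}$, with $h_{P},h_{V}$ bounded. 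Feeding $d\|M\tilde{R}\|_{{\rm I}}=-\tfrac{1}{4}{\rm Tr}\{M\,d\tilde{R}\}$ through the trace identity used for \eqref{eq:NAV_Detailed_True_STCH_dot} casts everything in the abstract form \eqref{eq:NAV_Detailed_True_STCH_dot-1} with error state $z=[\,\|M\tilde{R}\|_{{\rm I}},\tilde{P}_{\varepsilon}^{\top},\tilde{V}^{\top},\tilde{\sigma}^{\top}\,]^{\top}$; note that the drift of $\|M\tilde{R}\|_{{\rm I}}$ carries a second-order It\^o correction of the form ${\rm Tr}\{\hat{R}\mathcal{Q}^{2}\hat{R}^{\top}(\cdot)\}$ involving the unknown covariance.

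For the Lyapunov candidate I would take $\mathbb{V}(z)=\mathbb{V}_{\tilde{R}}(\|M\tilde{R}\|_{{\rm I}})+[\tilde{P}_{\varepsilon}^{\top}\ \tilde{V}^{\top}]\bar{\mathbf{P}}[\tilde{P}_{\varepsilon}^{\top}\ \tilde{V}^{\top}]^{\top}+\tfrac{1}{2\gamma_{\sigma}}\|\tilde{\sigma}\|^{2}$, where $\mathbb{V}_{\tilde{R}}$ is a strictly increasing smooth function of $\|M\tilde{R}\|_{{\rm I}}$ whose first and second derivatives regenerate the gain factors $(\|M\tilde{R}\|_{{\rm I}}+1)$, $(\|M\tilde{R}\|_{{\rm I}}+2)$ and $\exp(\|M\tilde{R}\|_{{\rm I}})$ that appear in $w_{\Omega}$, $k_{R}$ and $\dot{\hat{\sigma}}_{\Omega}$ in \eqref{eq:NAV_Filter1_Detailed}, and $\bar{\mathbf{P}}\succ0$ is chosen so that its quadratic form is a strict Lyapunov function for the Hurwitz linear part $(\tilde{P}_{\varepsilon},\tilde{V})\mapsto(\tilde{V}-k_{v}\tilde{P}_{\varepsilon},-k_{a}\tilde{P}_{\varepsilon})$ (Hurwitz since $k_{v},k_{a}>0$). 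Using Lemma~\ref{Lemm:SLAM_Lemma1}, the fact that $\underline{\lambda}_{\overline{\mathbf{M}}}=\lambda_{2}+\lambda_{1}>0$ under Assumption~\ref{Assum:NAV_1Landmark}, and $\tilde{R}(0)\notin\mathbb{U}_{s}$, the function $\mathbb{V}$ is bounded below and above by class $\mathcal{K}_{\infty}$ functions of $\|z\|$, so \eqref{eq:NAV_Vfunction_Lyap} holds.

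Next I would evaluate $\mathcal{L}\mathbb{V}=\mathbb{V}_{z}^{\top}f+\tfrac{1}{2}{\rm Tr}\{h\overline{\mathcal{Q}}^{2}h^{\top}\mathbb{V}_{zz}\}$. Substituting the first line of $w_{\Omega}$ produces $-k_{w}(\|M\tilde{R}\|_{{\rm I}}+1)\|\boldsymbol{\Upsilon}(M\tilde{R})\|^{2}$, which by the lower bound in \eqref{eq:SLAM_lemm1_1} becomes a negative-definite function of $\|M\tilde{R}\|_{{\rm I}}$ as long as $1+{\rm Tr}\{\tilde{R}\}>0$; the quadratic term contributes a strictly negative form in $(\tilde{P}_{\varepsilon},\tilde{V})$ plus a cross term in $(\mathbf{I}_{3}-\tilde{R})\overrightarrow{\mathtt{g}}$ that is dominated by completing the square and bounding $\|(\mathbf{I}_{3}-\tilde{R})\overrightarrow{\mathtt{g}}\|^{2}\le8\|\overrightarrow{\mathtt{g}}\|^{2}\|\tilde{R}\|_{{\rm I}}$ against the attitude term. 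The It\^o trace term splits into a bounded piece coming from $h_{P},h_{V}$ (which adds a constant to $\eta_{2}$) and an attitude piece $\propto{\rm Tr}\{\hat{R}\mathcal{Q}^{2}\hat{R}^{\top}(\cdot)\}$ that is quadratic in $\boldsymbol{\Upsilon}(M\tilde{R})$; the $\hat{\sigma}$-term inside $w_{\Omega}$ and the update law for $\dot{\hat{\sigma}}_{\Omega}$ are designed so that, after bounding $\mathcal{Q}_{i,i}^{2}\le\sigma_{i}^{2}$ componentwise through the ${\rm diag}(\hat{R}^{\top}\boldsymbol{\Upsilon}(M\tilde{R}))$ structure and adding $\tfrac{1}{\gamma_{\sigma}}\tilde{\sigma}^{\top}\dot{\tilde{\sigma}}$, the $\sigma$-dependent terms telescope, and $-k_{\sigma}\gamma_{\sigma}\tilde{\sigma}^{\top}\hat{\sigma}$ yields by Young's inequality $-\tfrac{k_{\sigma}\gamma_{\sigma}}{2}\|\tilde{\sigma}\|^{2}+\tfrac{k_{\sigma}\gamma_{\sigma}}{2}\|\sigma\|^{2}$. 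Collecting all contributions gives $\mathcal{L}\mathbb{V}\le-\eta_{1}\mathbb{V}+\eta_{2}$ with $\eta_{1}>0$ governed by $k_{w}\underline{\lambda}_{\overline{\mathbf{M}}}$, the decay rate of the $(\tilde{P}_{\varepsilon},\tilde{V})$ subsystem, $k_{\sigma}$, and the smallest value of $1+{\rm Tr}\{\tilde{R}\}$ along the trajectory, and $\eta_{2}$ collecting $\tfrac{k_{\sigma}\gamma_{\sigma}}{2}\|\sigma\|^{2}$ and the bounded trace constant. Lemma~\ref{Lemm:NAV_deng} then gives \eqref{eq:NAV_EVfunction_Lyap}, so $z$ --- hence $\tilde{R}$, $\tilde{P}_{\varepsilon}$ (and $\tilde{P}$, via $\tilde{P}_{\varepsilon}=\tilde{P}+(\tilde{R}-\mathbf{I}_{3})p_{c}$), $\tilde{V}$, $\tilde{\sigma}$, and all closed-loop signals --- is almost SGUUB in the mean square.

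I expect the decisive difficulty to be the second-order It\^o bookkeeping: establishing that, on the attitude block, the trace term collapses to a quadratic form $\propto\boldsymbol{\Upsilon}(M\tilde{R})^{\top}\hat{R}\,{\rm diag}(\cdot)\,\mathcal{Q}^{2}\hat{R}^{\top}\boldsymbol{\Upsilon}(M\tilde{R})$ that is matched --- up to the $O(\|\sigma\|^{2})$ constant --- by the $\hat{\sigma}$-adaptive correction and its update law, while simultaneously keeping the deterministic $\tilde{R}$--$\tilde{P}_{\varepsilon}$--$\tilde{V}$ coupling negative semidefinite; this is exactly what forces the particular $(\|M\tilde{R}\|_{{\rm I}}+1)$, $(\|M\tilde{R}\|_{{\rm I}}+2)$ and $\exp(\|M\tilde{R}\|_{{\rm I}})$ structure of $\mathbb{V}_{\tilde{R}}$. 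A secondary point is justifying the ``almost'' and ``semi-global'' qualifiers: the lower bound in \eqref{eq:SLAM_lemm1_1} degenerates precisely on $\mathbb{U}_{s}$, so one must show that a trajectory with $\tilde{R}(0)\notin\mathbb{U}_{s}$ stays in a sublevel set of $\mathbb{V}$ bounded away from $\|M\tilde{R}\|_{{\rm I}}=1$, which is why $\eta_{1}$ and the ultimate bound depend on the initial condition.
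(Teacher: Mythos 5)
Your plan is sound and shares the paper's overall architecture---derive the closed-loop error SDE on $\mathbb{SE}_{2}(3)$, build a composite Lyapunov function, bound the generator as $\mathcal{L}\mathbb{V}\le-\eta_{1}\mathbb{V}+\eta_{2}$, and invoke Lemma~\ref{Lemm:NAV_deng}---and your attitude/adaptive block is in fact identical to the paper's: the function whose first and second derivatives regenerate the $(\|M\tilde{R}\|_{{\rm I}}+1)$, $(\|M\tilde{R}\|_{{\rm I}}+2)$, $\exp(\|M\tilde{R}\|_{{\rm I}})$ factors is exactly the paper's $\exp(\|M\tilde{R}\|_{{\rm I}})\|M\tilde{R}\|_{{\rm I}}+\tfrac{1}{2\gamma_{\sigma}}\|\tilde{\sigma}\|^{2}$, and your telescoping of the It\^o trace term against the $\hat{\sigma}$-law is the paper's \eqref{eq:NAV_LyapR_1dot_1}--\eqref{eq:NAV_LyapR_1dot_2}. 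Where you genuinely diverge is the translational block: the paper uses the quartic $\tfrac{1}{4}\|\tilde{R}^{\top}\tilde{P}_{\varepsilon}\|^{4}+\tfrac{1}{4k_{a}}\|\tilde{R}^{\top}\tilde{V}\|^{4}-\tfrac{1}{\mu}\|\tilde{R}^{\top}\tilde{V}\|^{2}\tilde{V}^{\top}\tilde{R}\tilde{R}^{\top}\tilde{P}_{\varepsilon}$, whereas you take a quadratic form $z^{\top}\bar{\mathbf{P}}z$ from the Lyapunov equation of the Hurwitz pair $(\tilde{P}_{\varepsilon},\tilde{V})\mapsto(\tilde{V}-k_{v}\tilde{P}_{\varepsilon},-k_{a}\tilde{P}_{\varepsilon})$ (correctly Hurwitz: characteristic polynomial $\lambda^{2}+k_{v}\lambda+k_{a}$). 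Both constructions exist for the same reason---the velocity channel has no direct damping, so negativity in $\tilde{V}$ must be manufactured through an off-diagonal coupling with $\tilde{P}_{\varepsilon}$; the paper's $\mu$-cross-term and your $\bar{\mathbf{P}}_{12}$ block play the same role. Your version buys simpler It\^o bookkeeping ($\mathbb{V}_{zz}$ constant, so $h_{P},h_{V}$ contribute only a constant to $\eta_{2}$, avoiding the paper's $c_{1},c_{2},c_{3}$ corrections to the $A_{1}$ matrix), at the cost of the weaker $-\|\cdot\|^{2}$ rather than $-\|\cdot\|^{4}$ decay and of having to verify that $\bar{\mathbf{P}}$-induced cross terms with the residual drift (e.g.\ the gravity perturbation $(\mathbf{I}_{3}-\tilde{R})\overrightarrow{\mathtt{g}}$) do not break the gain conditions; your inertial-frame reduction $d\tilde{P}_{\varepsilon}=(\tilde{V}-k_{v}\tilde{P}_{\varepsilon})dt+\cdots$ checks out against the paper's body-frame \eqref{eq:NAV_Filter1_Error_dotf}. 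One point where you are actually more careful than the paper: you flag that $c_{R}\propto(1+{\rm Tr}\{\tilde{R}\})$ degenerates on $\mathbb{U}_{s}$ and therefore a sublevel-set invariance argument is needed to keep $\eta_{1}$ bounded away from zero---the paper treats $c_{R}$ as a constant without comment, and this is precisely what makes the result semi-global rather than global.
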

\begin{proof}Considering \eqref{eq:NAV_Detailed_True_STCH_dot} and
	\eqref{eq:NAV_Filter1_Detailed}, one obtains
	
	\begin{equation}
	\begin{cases}
	d\tilde{R} & =\tilde{R}[w_{\Omega}]_{\times}dt-\tilde{R}[\hat{R}\mathcal{Q}d\beta_{\Omega}]_{\times}\\
	d\tilde{P} & =(\tilde{V}+\tilde{R}w_{V})dt+\tilde{R}[\hat{P}]_{\times}\hat{R}\mathcal{Q}d\beta_{\Omega}\\
	d\tilde{V} & =((\mathbf{I}_{3}-\tilde{R})g+\tilde{R}w_{a})dt-\tilde{R}\hat{R}\mathcal{Q}d\beta_{a}\\
	& \hspace{1em}-\tilde{R}[\hat{V}]_{\times}\hat{R}\mathcal{Q}d\beta
	\end{cases}\label{eq:NAV_Filter1_Error_dot}
	\end{equation}
	Thus, it is straightforward to show that
	\begin{equation}
	\begin{cases}
	d||M\tilde{R}||_{{\rm I}} & =\underbrace{\frac{1}{2}\boldsymbol{\Upsilon}(M\tilde{R})^{\top}w_{\Omega}}_{f_{R}}dt+\underbrace{-\frac{1}{2}\boldsymbol{\Upsilon}(M\tilde{R})^{\top}\hat{R}}_{h_{R}}\mathcal{Q}d\beta_{\Omega}\\
	d\tilde{R}^{\top}\tilde{P}_{\varepsilon} & =\underbrace{(\tilde{R}^{\top}\tilde{V}-[p_{c}-\tilde{R}^{\top}\tilde{P}_{\varepsilon}]_{\times}w_{\Omega}+w_{V})}_{f_{P}}dt\\
	& \hspace{1em}+\underbrace{-[\hat{P}-p_{c}+\tilde{R}^{\top}\tilde{P}_{\varepsilon}]_{\times}\hat{R}}_{h_{P}}\mathcal{Q}d\beta_{\Omega}\\
	d\tilde{R}^{\top}\tilde{V} & =\underbrace{(-[w_{\Omega}]_{\times}\tilde{R}^{\top}\tilde{V}+(\tilde{R}-\mathbf{I}_{3})^{\top}\overrightarrow{\mathtt{g}}+w_{a})}_{f_{V}}dt\\
	& \hspace{1em}+\underbrace{-[\begin{array}{cc}
		[\tilde{R}^{\top}V]_{\times}\hat{R} & \hat{R}\end{array}]}_{h_{V}}[\begin{array}{cc}
	\mathcal{Q}d\beta_{\Omega} & \mathcal{Q}d\beta_{a}\end{array}]^{\top}
	\end{cases}\label{eq:NAV_Filter1_Error_dotf}
	\end{equation}
	Let $\mathbb{V}=\mathbb{V}(||M\tilde{R}||_{{\rm I}},\tilde{R}^{\top}\tilde{P}_{\varepsilon},\tilde{R}^{\top}\tilde{V},\tilde{\sigma})$
	be a Lyapunov function candidate given by
	\begin{equation}
	\mathbb{V}=\mathbb{V}^{a}+\mathbb{V}^{b}\label{eq:NAV_LyapT_1}
	\end{equation}
	The real-valued function $\mathbb{V}^{a}$ has the map $\mathbb{V}^{a}:\mathbb{SO}\left(3\right)\times\mathbb{R}^{3}\rightarrow\mathbb{R}_{+}$
	defined by
	\begin{equation}
	\mathbb{V}^{a}=\exp(||M\tilde{R}||_{{\rm I}})||M\tilde{R}||_{{\rm I}}+\frac{1}{2\gamma_{\sigma}}||\tilde{\sigma}||^{2}\label{eq:NAV_LyapR_1}
	\end{equation}
	In view of \eqref{eq:NAV_Vfunction_Lyap0}, one easily finds that
	$\mathbb{V}_{||M\tilde{R}||_{{\rm I}}}^{a}=(||M\tilde{R}||_{{\rm I}}+1)\exp(E_{R})$
	and $\mathbb{V}_{||M\tilde{R}||_{{\rm I}}||M\tilde{R}||_{{\rm I}}}^{a}=(||M\tilde{R}||_{{\rm I}}+2)\exp(E_{R})$.
	From \eqref{eq:NAV_dVfunction_Lyap} and \eqref{eq:NAV_Filter1_Error_dotf},
	one {\small{}obtains
		\begin{align}
		\mathcal{L}\mathbb{V}^{a}= & \mathbb{V}_{||M\tilde{R}||_{{\rm I}}}^{a^{\top}}f_{R}+\frac{1}{2}{\rm Tr}\{h_{R}\mathcal{Q}^{2}h_{R}^{\top}\mathbb{V}_{||M\tilde{R}||_{{\rm I}}||M\tilde{R}||_{{\rm I}}}^{a}\}-\frac{1}{\gamma_{\sigma}}\tilde{\sigma}^{\top}\dot{\hat{\sigma}}\nonumber \\
		\leq & \frac{1}{2}\exp(||M\tilde{R}||_{{\rm I}})(||M\tilde{R}||_{{\rm I}}+1)\boldsymbol{\Upsilon}(M\tilde{R})^{\top}w_{\Omega}\nonumber \\
		& +k_{R}\boldsymbol{\Upsilon}(M\tilde{R})^{\top}\hat{R}{\rm diag}(\sigma)\hat{R}^{\top}\boldsymbol{\Upsilon}(M\tilde{R})-\frac{1}{\gamma_{\sigma}}\tilde{\sigma}^{\top}\dot{\hat{\sigma}}\label{eq:NAV_LyapR_1dot_1}
		\end{align}
	}where $\mathcal{Q}^{2}\leq{\rm diag}(\sigma)$ as in \eqref{eq:NAV_s}.
	Replacing $w_{\Omega}$ and $\dot{\hat{\sigma}}$ with their definitions
	in \eqref{eq:NAV_Filter1_Detailed} and considering \eqref{eq:SLAM_lemm1_1}
	in Lemma \ref{Lemm:SLAM_Lemma1}, one find{\small{}s
		\begin{align}
		& \mathcal{L}\mathbb{V}^{a}\leq-(1+{\rm Tr}\{\tilde{R}\})\frac{k_{w}\underline{\lambda}_{\overline{\mathbf{M}}}}{4}\exp(||M\tilde{R}||_{{\rm I}})||M\tilde{R}||_{{\rm I}}+k_{\sigma}\tilde{\sigma}^{\top}\hat{\sigma}\nonumber \\
		& \hspace{1em}\leq-k_{w}c_{R}\exp(||M\tilde{R}||_{{\rm I}})||M\tilde{R}||_{{\rm I}}-\frac{k_{\sigma}}{2}||\tilde{\sigma}||^{2}+\frac{k_{\sigma}}{2}||\sigma||^{2}\label{eq:NAV_LyapR_1dot_2}
		\end{align}
	}where $k_{\sigma}\tilde{\sigma}^{\top}\sigma\leq(k_{\sigma}/2)||\sigma||^{2}+(k_{\sigma}/2)||\tilde{\sigma}||^{2}$
	as to Young's inequality and $c_{R}=\frac{1}{4}\underline{\lambda}_{\overline{\mathbf{M}}}(1+{\rm Tr}\{\tilde{R}\})$.
	Bringing our attention to the second part of \eqref{eq:NAV_LyapT_1},
	the real-valued function $\mathbb{V}^{b}$ has the map $\mathbb{V}^{b}:\mathbb{SO}\left(3\right)\times\mathbb{R}^{3}\times\mathbb{R}^{3}\rightarrow\mathbb{R}_{+}$
	defined by
	\begin{align}
	\mathbb{V}^{b}= & \frac{||\tilde{R}^{\top}\tilde{P}_{\varepsilon}||^{4}}{4}+\frac{||\tilde{R}^{\top}\tilde{V}||^{4}}{4k_{a}}-\frac{||\tilde{R}^{\top}\tilde{V}||^{2}\tilde{V}^{\top}\tilde{R}\tilde{R}^{\top}\tilde{P}_{\varepsilon}}{\mu}\label{eq:NAV_LyapPV_1}
	\end{align}
	where $k_{a}$ and $\mu$ are positive constants. From \eqref{eq:NAV_Vfunction_Lyap0},
	one has
	\begin{align}
	\mathbb{V}_{\tilde{R}^{\top}\tilde{P}_{\varepsilon}}^{b} & =||\tilde{R}^{\top}\tilde{P}_{\varepsilon}||^{2}\tilde{R}^{\top}\tilde{P}_{\varepsilon}-\frac{1}{\mu}||\tilde{R}^{\top}\tilde{V}||^{2}\tilde{R}^{\top}\tilde{V}\nonumber \\
	\mathbb{V}_{\tilde{R}^{\top}\tilde{P}_{\varepsilon}\tilde{R}^{\top}\tilde{P}_{\varepsilon}}^{b} & =||\tilde{R}^{\top}\tilde{P}_{\varepsilon}||^{2}\mathbf{I}_{3}+2\tilde{R}^{\top}\tilde{P}_{\varepsilon}\tilde{P}_{\varepsilon}^{\top}\tilde{R}\nonumber \\
	\mathbb{V}_{\tilde{R}^{\top}\tilde{V}}^{b} & =\frac{1}{k_{d}}||\tilde{R}^{\top}\tilde{V}||^{2}\tilde{R}^{\top}\tilde{V}-\frac{1}{\mu}||\tilde{R}^{\top}\tilde{V}||^{2}\tilde{R}^{\top}\tilde{P}_{\varepsilon}\nonumber \\
	\mathbb{V}_{\tilde{R}^{\top}\tilde{V}\tilde{R}^{\top}\tilde{V}}^{b} & =\frac{1}{k_{d}}||\tilde{R}^{\top}\tilde{V}||^{2}\mathbf{I}_{3}-\frac{2}{\mu}\tilde{R}^{\top}\tilde{P}_{\varepsilon}\tilde{V}^{\top}\tilde{R}\label{eq:NAV_LyapPV_vv}
	\end{align}
	Therefore, from \eqref{eq:NAV_dVfunction_Lyap}, \eqref{eq:NAV_Filter1_Error_dotf},
	\eqref{eq:NAV_LyapPV_1}, and \eqref{eq:NAV_LyapPV_vv}, one obtains
	\begin{align}
	& \mathcal{L}\mathbb{V}^{b}=\mathbb{V}_{\tilde{R}^{\top}\tilde{P}_{\varepsilon}}^{b^{\top}}f_{P}+\frac{1}{2}{\rm Tr}\{h_{P}\mathcal{Q}^{2}h_{P}^{\top}\mathbb{V}_{\tilde{R}^{\top}\tilde{P}_{\varepsilon}\tilde{R}^{\top}\tilde{P}_{\varepsilon}}^{b}\}\nonumber \\
	& \hspace{1em}+\mathbb{V}_{\tilde{R}^{\top}\tilde{V}}^{b^{\top}}f_{V}+\frac{1}{2}{\rm Tr}\{h_{V}\mathcal{Q}^{2}h_{V}^{\top}\mathbb{V}_{\tilde{R}^{\top}\tilde{V}\tilde{R}^{\top}\tilde{V}}^{b}\}\nonumber \\
	& \leq-(k_{v}-c_{1})||\tilde{R}^{\top}\tilde{P}_{\varepsilon}||^{4}-(1/\mu-c_{2})||\tilde{R}^{\top}\tilde{V}||^{4}\nonumber \\
	& \hspace{1em}+(k_{d}k_{v}/\mu^{2}+c_{3})||\tilde{R}^{\top}\tilde{V}||^{2}||\tilde{R}^{\top}\tilde{P}_{\varepsilon}||^{2}\nonumber \\
	& \hspace{1em}+c_{g}||\tilde{R}^{\top}\tilde{V}||^{2}||\mathbf{I}_{3}-\tilde{R}||_{F}+(\frac{3c_{P}}{2}+\frac{c_{V}}{4k_{d}})||\sigma||^{2}\label{eq:NAV_LyapPV_1dot_1}
	\end{align}
	where $c_{1}=\frac{3c_{P}}{4}+\frac{1}{2}$, $c_{2}=\frac{||\overrightarrow{\mathtt{g}}||}{2k_{d}}+\frac{1}{4k_{d}}$,
	$c_{3}=\frac{||\overrightarrow{\mathtt{g}}||}{2\mu}+\frac{1}{4k_{d}}$,
	$c_{g}=\frac{||\overrightarrow{\mathtt{g}}||}{2k_{d}}+\frac{||\overrightarrow{\mathtt{g}}||}{2\mu}$,
	$c_{V}=\sup_{t\geq0}(1+||V||^{2})$, and $c_{P}=\sup_{t\geq0}||P-p_{c}||^{2}$.
	Also, one finds ${\rm Tr}\{\hat{R}\mathcal{Q}^{2}\hat{R}^{\top}\}={\rm Tr}\{\mathcal{Q}^{2}\}$,
	$\tilde{V}^{\top}\tilde{R}\left[w_{\Omega}\right]_{\times}\tilde{R}^{\top}\tilde{V}=0$,
	and $||\mathbf{I}_{3}-\tilde{R}||_{F}=2\sqrt{2}\sqrt{||\tilde{R}||_{{\rm I}}}\leq4\overline{\lambda}_{M}\sqrt{||M\tilde{R}||_{{\rm I}}}$.
	Hence, the inequality in \eqref{eq:NAV_LyapPV_1dot_1} becomes
	\begin{align}
	& \mathcal{L}\mathbb{V}^{b}\leq-e_{1}^{\top}\underbrace{\left[\begin{array}{cc}
		k_{v}-c_{1} & \frac{1}{2}(k_{d}k_{v}/\mu^{2}+c_{3})\\
		\frac{1}{2}(k_{d}k_{v}/\mu^{2}+c_{3}) & \frac{1}{\mu}-c_{2}
		\end{array}\right]}_{A_{1}}e_{1}\nonumber \\
	& \hspace{1.7em}+c_{g}||\tilde{R}^{\top}\tilde{V}||^{2}||\mathbf{I}_{3}-\tilde{R}||_{F}+(\frac{3c_{P}}{2}+\frac{c_{V}}{4k_{d}})||\sigma||^{2}\label{eq:NAV_LyapPV_1dot_Final}
	\end{align}
	where $e_{1}=[||\tilde{R}^{\top}\tilde{P}_{\varepsilon}||^{2},||\tilde{R}^{\top}\tilde{V}||^{2}]^{\top}$.
	It can be deduced that $A_{1}$ can be made positive by selecting
	$k_{v}>3/4$, $k_{v}>\mu c_{x}/\underline{c}_{1}$, and $k_{v}>c_{x}\mu^{2}/(\mu\underline{c}_{1}-c_{x})$.
	Considering the parameter selection above, define $\underline{\lambda}_{1}=\underline{\lambda}(A_{1})$.
	In view of \eqref{eq:NAV_LyapT_1}, \eqref{eq:NAV_LyapR_1dot_2},
	and \eqref{eq:NAV_LyapPV_1dot_Final}, the differential operator $\mathcal{L}\mathbb{V}$
	can be expressed as
	\begin{align}
	\mathcal{L}\mathbb{V}\leq & -k_{w}c_{R}\exp(||M\tilde{R}||_{{\rm I}})||M\tilde{R}||_{{\rm I}}-(k_{\sigma}/2)||\tilde{\sigma}||^{2}\nonumber \\
	& +(k_{\sigma}/2)||\sigma||^{2}-\underline{\lambda}_{1}||\tilde{R}^{\top}\tilde{P}_{\varepsilon}||^{4}-\underline{\lambda}_{1}||\tilde{R}^{\top}\tilde{V}||^{4}\nonumber \\
	& +c_{g}||\tilde{R}^{\top}\tilde{V}||^{2}||\mathbf{I}_{3}-\tilde{R}||_{F}+(\frac{3c_{P}}{2}+\frac{c_{V}}{4k_{d}})||\sigma||^{2}\nonumber \\
	\leq & -e_{2}^{\top}\underbrace{\left[\begin{array}{c|c}
		k_{w}c_{R} & \frac{c_{g}}{2}\mathbf{1}_{1\times2}\\
		\hline \frac{c_{g}}{2}\mathbf{1}_{2\times1} & \underline{\lambda}_{1}\mathbf{I}_{2}
		\end{array}\right]}_{A_{2}}e_{2}-(k_{\sigma}/2)||\tilde{\sigma}||^{2}\nonumber \\
	& +\eta_{2}||\sigma||^{2}\label{eq:NAV_LyapPV_1dot_2}
	\end{align}
	where $e_{2}=[\sqrt{\exp(||M\tilde{R}||_{{\rm I}})||M\tilde{R}||_{{\rm I}}},||\tilde{R}^{\top}\tilde{P}_{\varepsilon}||^{2},||\tilde{R}^{\top}\tilde{V}||^{2}]^{\top}$
	and $\eta_{2}=\frac{3c_{P}}{2}+\frac{c_{V}}{4k_{d}}$. To make $A_{2}$
	positive, consider selecting $k_{w}c_{R}\underline{\lambda}_{1}>c_{g}^{2}/4$.
	Define {\footnotesize{}$e_{T}=[\sqrt{\exp(||M\tilde{R}||_{{\rm I}})||M\tilde{R}||_{{\rm I}}},||\tilde{R}^{\top}\tilde{P}_{\varepsilon}||^{2},||\tilde{R}^{\top}\tilde{V}||^{2},||\tilde{\sigma}||]^{\top}$}
	and $\eta_{1}=\min\{\underline{\lambda}(A_{2}),k_{\sigma}/2\}$. Hence,
	one has
	\begin{align}
	\mathcal{L}\mathbb{V}\leq & -\eta_{1}||e_{T}||^{2}+\eta_{2}||\sigma||^{2}\label{eq:NAV_LyapT_1dot_Final}
	\end{align}
	such that
	\begin{equation}
	d\mathbb{E}[\mathbb{V}]/dt=\mathbb{E}[\mathcal{L}\mathbb{V}]\leq-\eta_{1}\mathbb{E}[\mathbb{V}]+\eta_{2}\label{eq:SLAM_Lyap2_final3}
	\end{equation}
	In accordance with Lemma \ref{Lemm:NAV_deng}, it becomes apparent
	that
	\[
	0\leq\mathbb{E}[\mathbb{V}(t)]\leq\mathbb{V}(0){\rm exp}(-\eta_{1}t)+\text{\ensuremath{\eta_{2}}}/\eta_{1},\,\forall t\geq0
	\]
	Thus, it can be seen that the vector $e_{T}$ is almost SGUUB which
	completes the proof.\end{proof}

\subsection{Nonlinear Stochastic Observer with Unknown Gravity}

Consider an unknown gravity vector $\overrightarrow{\mathtt{g}}$
and let $\hat{\mathtt{g}}$ denote the estimate of $\overrightarrow{\mathtt{g}}$.
Define the error between $\hat{\mathtt{g}}$ and $\overrightarrow{\mathtt{g}}$
as $\tilde{\mathtt{g}}=\overrightarrow{\mathtt{g}}-\tilde{R}\hat{\mathtt{g}}$.
Modify $w_{a}$ in the observer design in \eqref{eq:NAV_Filter1_Detailed}
to include $\dot{\hat{\mathtt{g}}}$ as follows:
\begin{align}
w_{a} & =-\hat{\mathtt{g}}-k_{a}\tilde{R}^{\top}\tilde{P}_{\varepsilon}\nonumber \\
\dot{\hat{\mathtt{g}}} & =-[w_{\Omega}]_{\times}\hat{\mathtt{g}}+\mu\gamma_{g}\tilde{R}^{\top}\tilde{P}_{\varepsilon}\label{eq:SLAM_Gravity}
\end{align}
where $\gamma_{g}>0$ is an adaptation gain. Let $\mathbb{V}=\mathbb{V}(||M\tilde{R}||_{{\rm I}},\tilde{R}^{\top}\tilde{P}_{\varepsilon},\tilde{R}^{\top}\tilde{V},\tilde{R}^{\top}\tilde{\mathtt{g}},\tilde{\sigma})$
be a Lyapunov function candidate given by $\mathbb{V}=\mathbb{V}^{a}+\mathbb{V}^{b}$
where $\mathbb{V}^{a}$ is as in \eqref{eq:NAV_LyapR_1} while $\mathbb{V}^{b}=\frac{||\tilde{R}^{\top}\tilde{P}_{\varepsilon}||^{4}}{4}+\frac{||\tilde{R}^{\top}\tilde{V}||^{4}}{4k_{a}}+\frac{||\tilde{R}^{\top}\tilde{\mathtt{g}}||^{2}}{2\gamma_{g}}-\frac{||\tilde{R}^{\top}\tilde{V}||^{2}\tilde{V}^{\top}\tilde{R}\tilde{R}^{\top}\tilde{P}_{\varepsilon}}{\mu}-\frac{||\tilde{R}^{\top}\tilde{V}||^{2}\tilde{V}^{\top}\tilde{R}\tilde{R}^{\top}\tilde{\mathtt{g}}}{\mu}$.
Following the analogous proving steps of Theorem \ref{thm:Theorem1}
the obtained result is similar to \eqref{eq:SLAM_Lyap2_final3}.

The detailed implementation steps of the observer in its discrete
form can be found in Algorithm \ref{alg:Alg_Disc0}, where $\Delta t$
denotes a small sample time.
\begin{algorithm}
	\caption{\label{alg:Alg_Disc0}Discrete nonlinear stochastic observer}
	
	\textbf{Initialization}:
	\begin{enumerate}
		\item[{\footnotesize{}1:}] Set $\hat{R}_{0|0}\in\mathbb{SO}\left(3\right)$, and $\hat{P}_{0|0},\hat{V}_{0|0},\hat{\sigma}_{0|0},\hat{\mathtt{g}}_{0}\in\mathbb{R}^{3}$
		\item[{\footnotesize{}2:}] Start with $k=0$ and select the design parameters
	\end{enumerate}
	\textbf{while }(1)\textbf{ do}
	\begin{enumerate}
		\item[{\footnotesize{}3:}] $\hat{X}_{k|k}=\left[\begin{array}{ccc}
		\hat{R}_{k|k} & \hat{P}_{k|k} & \hat{V}_{k|k}\\
		0_{1\times3} & 1 & 0\\
		0_{1\times3} & 0 & 1
		\end{array}\right]$ and \\
		$\hat{U}_{k}=\left[\begin{array}{ccc}
		[\Omega_{m}[k]\text{\ensuremath{]_{\times}}} & 0_{3\times1} & a_{m}[k]\\
		0_{1\times3} & 0 & 0\\
		0_{1\times3} & 1 & 0
		\end{array}\right]$
		\item[] \textcolor{blue}{/{*} Prediction {*}/}
		\item[{\footnotesize{}4:}] $\hat{X}_{k+1|k}=\hat{X}_{k|k}\exp(\hat{U}_{k}\Delta t)$
		\item[] \textcolor{blue}{/{*} Update step {*}/}
		\item[{\footnotesize{}5:}] {\small{}$\begin{cases}
			p_{c} & =\frac{1}{s_{T}}\sum_{i=1}^{n}s_{i}p_{i}[k],\hspace{1em}s_{T}=\sum_{i=1}^{n}s_{i}\\
			M_{k} & =\sum_{i=1}^{n}s_{i}p_{i}[k]p_{i}^{\top}[k]-s_{T}p_{c}p_{c}^{\top}\\
			M\tilde{R}_{k} & =\sum_{i=1}^{n}s_{i}\left(p_{i}[k]-p_{c}\right)y_{i}^{\top}[k]\hat{R}_{k+1|k}^{\top}\\
			\tilde{R}^{\top}\tilde{P}_{\varepsilon}[k] & =\frac{1}{s_{T}}\sum_{i=1}^{n}s_{i}(p_{i}[k]-\hat{R}_{k+1|k}y_{i}[k]-\hat{P}_{k+1|k})
			\end{cases}$}{\small\par}
		\item[{\footnotesize{}6:}] $\begin{cases}
		w_{\Omega}[k] & =-k_{w}(||M\tilde{R}_{k}||_{{\rm I}}+1)\boldsymbol{\Upsilon}(M\tilde{R}_{k})\\
		& \hspace{1em}-\frac{1}{4}\frac{||M\tilde{R}_{k}||_{{\rm I}}+2}{||M\tilde{R}_{k}||_{{\rm I}}+1}\hat{R}_{k}{\rm diag}(\hat{R}_{k}^{\top}\boldsymbol{\Upsilon}(M\tilde{R}_{k}))\hat{\sigma}_{k}\\
		w_{V}[k] & =[p_{c}[k]]_{\times}w_{\Omega}[k]-k_{v}\tilde{R}^{\top}\tilde{P}_{\varepsilon}[k]\\
		& {\color{blue}\text{ /* If gravity is known, \ensuremath{\hat{\mathtt{g}}_{k+1}=\ensuremath{\overrightarrow{\mathtt{g}}}} */}}\\
		\hat{\mathtt{g}}_{k+1} & =\hat{\mathtt{g}}_{k}\Delta t(-[w_{\Omega}[k]]_{\times}\hat{\mathtt{g}}_{k}+\mu\gamma_{g}\tilde{R}^{\top}\tilde{P}_{\varepsilon}[k])\\
		w_{a}[k] & =-\hat{\mathtt{g}}_{k+1}-k_{a}\tilde{R}^{\top}\tilde{P}_{\varepsilon}[k]
		\end{cases}$
		\item[{\footnotesize{}7:}] $W_{k}=\left[\begin{array}{ccc}
		[w_{\Omega}[k]\text{\ensuremath{]_{\times}}} & w_{V}[k] & w_{a}[k]\\
		0_{1\times3} & 0 & 0\\
		0_{1\times3} & 1 & 0
		\end{array}\right]$
		\item[{\footnotesize{}8:}] $k_{R}=\gamma_{\sigma}\frac{||M\tilde{R}_{k}||_{{\rm I}}+2}{8}\exp(||M\tilde{R}_{k}||_{{\rm I}})$
		\item[{\footnotesize{}9:}] {\small{}$\hat{\sigma}_{k+1}=\hat{\sigma}_{k}+\Delta tk_{R}{\rm diag}(\hat{R}_{k+1|k}^{\top}\boldsymbol{\Upsilon}(M\tilde{R}_{k}))\hat{R}_{k+1|k}^{\top}\boldsymbol{\Upsilon}(M\tilde{R}_{k})$}{\small\par}
		\item[] $\hspace{1em}\hspace{1em}\hspace{1em}-\Delta tk_{\sigma}\gamma_{\sigma}\hat{\sigma}_{k}$
		\item[{\footnotesize{}10:}] $\hat{X}_{k+1|k+1}=\exp(-W_{k}\Delta t)\hat{X}_{k+1|k}$
		\item[{\footnotesize{}11:}] $k=k+1$
	\end{enumerate}
	\textbf{end while}
\end{algorithm}

\section{Experimental Results \label{sec:SE3_Simulations}}

This section experimentally evaluates the performance of the proposed
nonlinear stochastic navigation observers on the Lie group of $\mathbb{SE}_{2}\left(3\right)$.
The discrete forms of the proposed observers (known gravity and unknown
gravity) outlined in Algorithm \ref{alg:Alg_Disc0} have been examined
using real-word data (EuRoc dataset) \cite{Burri2016Euroc}. The dataset
contains the ground truth, IMU measurements obtained by ADIS16448
at a sampling rate of 200 Hz, and stereo images obtained by MT9V034
at a sampling rate of 20 Hz. Owing to the fact that landmark positions
are not included in the dataset, landmarks are placed arbitrarily.
To increase the rigor of the experiment, IMU measurements were supplemented
with additional normally distributed noise $n_{\Omega}=\mathcal{N}(0,0.12)$
(rad/sec) and $n_{a}=\mathcal{N}(0,0.11)$ (m/sec$^{2}$) with a zero
mean and a standard deviation $0.12$ and $0.11$, respectively. The
design parameters are selected as follows: $k_{w}=3$, $k_{v}=10$,
$k_{a}=10$, $\gamma_{\sigma}=3$, $\gamma_{\mathtt{g}}=2$, and $k_{\sigma}=0.1$,
while the initial covariance estimate is $\hat{\sigma}\left(0\right)=\hat{\mathtt{g}}\left(0\right)=[0,0,0]^{\top}$.

Fig. \ref{fig:NAV_3D} shows strong tracking capabilities in view
of uncertain measurements and large initialization error in attitude
and position. Fig. \ref{fig:NAV_ERR} demonstrates fast convergence
of the error components $||R\hat{R}^{\top}||_{{\rm I}}$, $||P-\hat{P}||$,
$||V-\hat{V}||$, and $||\mathtt{g}-\hat{\mathtt{g}}||$ from large
values to the close neighborhood of the origin. It can be noticed
that impressive results have been achieved at low sampling rates demonstrating
the computational inexpensiveness of the proposed algorithm.

\begin{figure}[h]
	\centering{}\includegraphics[scale=0.33]{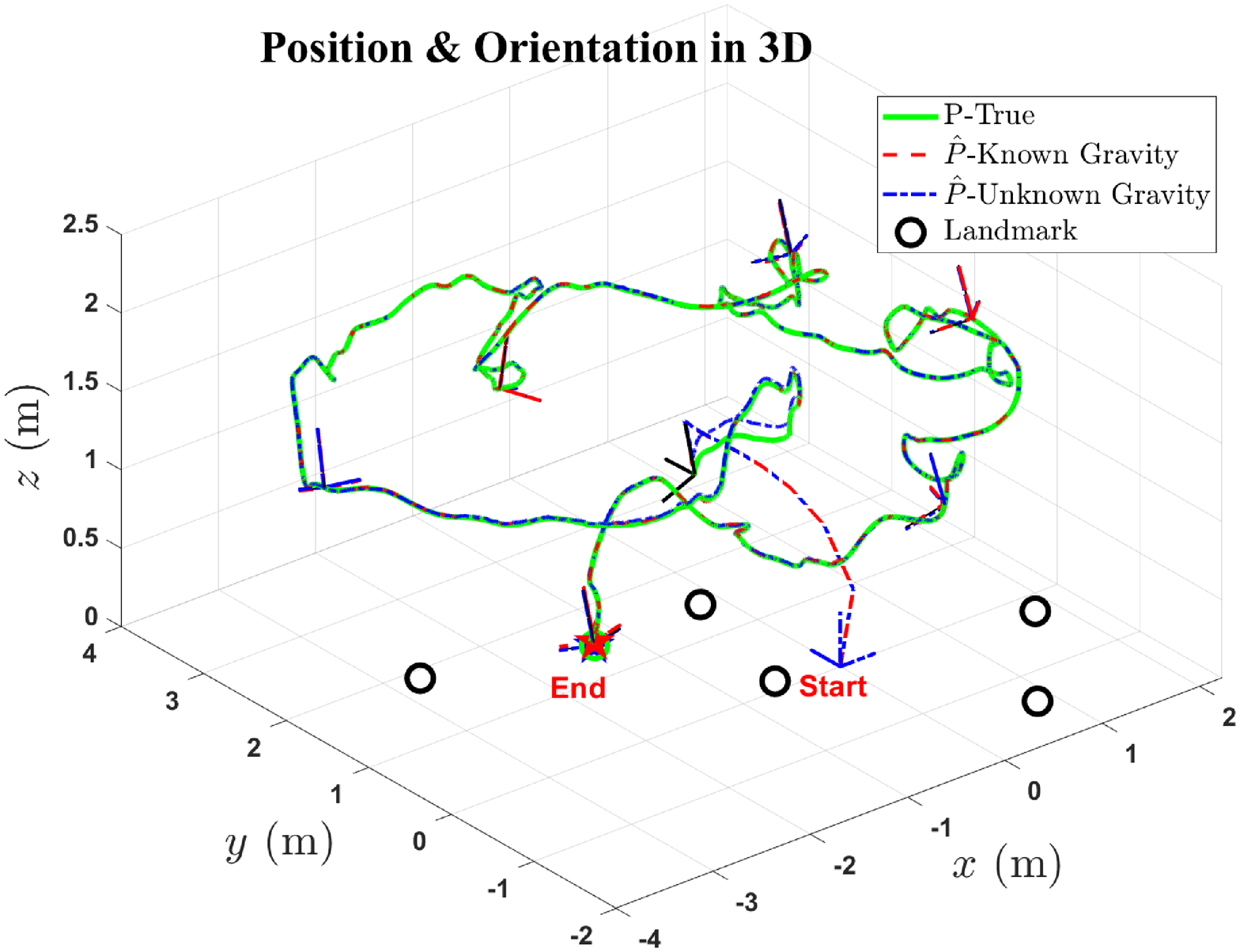}\caption{Vicon Room 2 01 experimental validation using a dataset. The true
		trajectories (green solid-line) and three axes true attitude (black
		solid-line) is plotted against the trajectory estimated by the proposed
		nonlinear stochastic discrete navigation observers (Algorithm \ref{alg:Alg_Disc0};
		red dashed-line and blue center-line). The landmarks are plotted as
		black circles.}
	\label{fig:NAV_3D}
\end{figure}

\begin{figure}[h]
	\centering{}\includegraphics[scale=0.3]{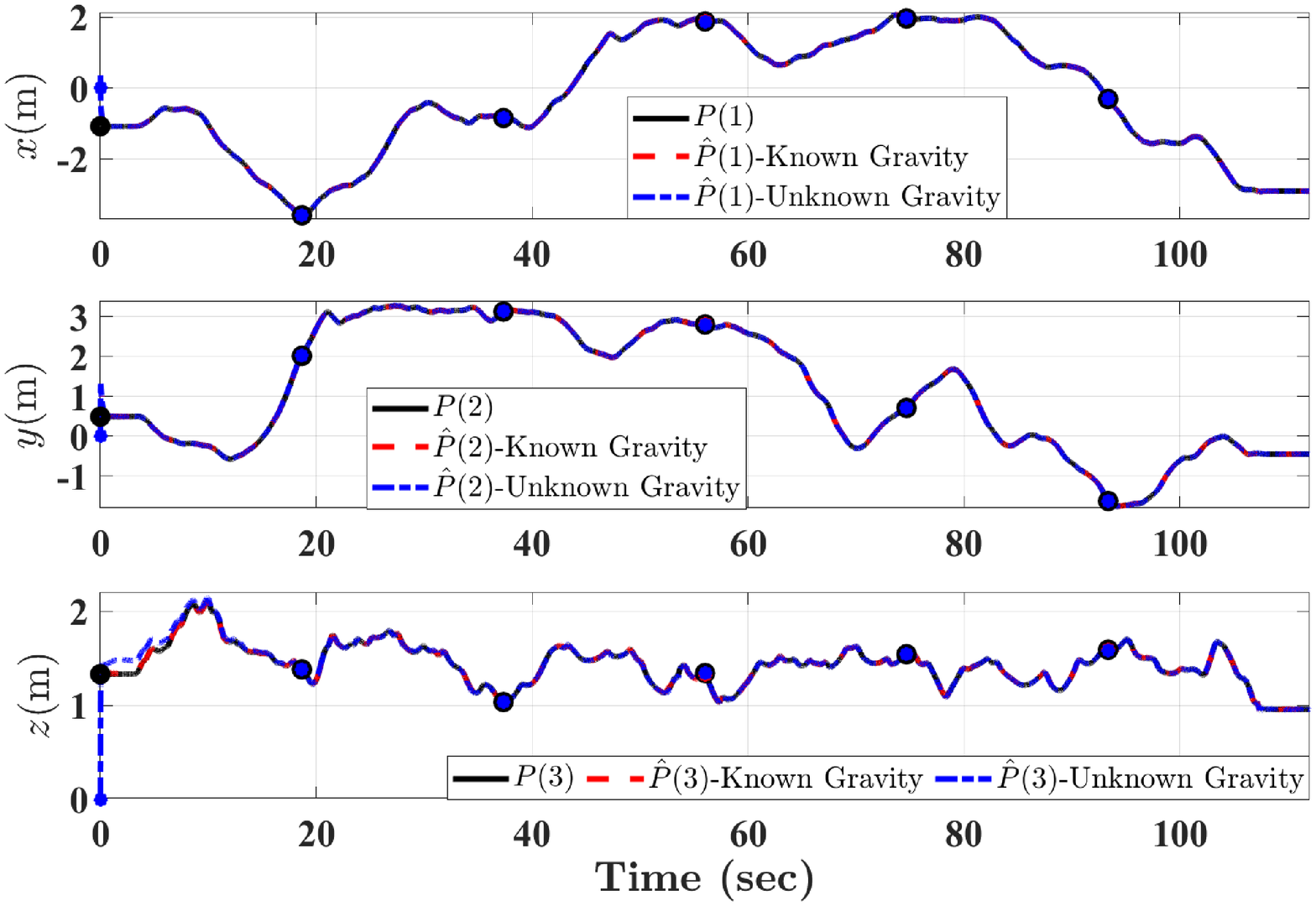}\caption{Position: true (black solid-line) vs estimated observers (red dashed-line
		and blue center-line).}
	\label{fig:NAV_Pos}
\end{figure}

\begin{figure}[h]
	\centering{}\includegraphics[scale=0.28]{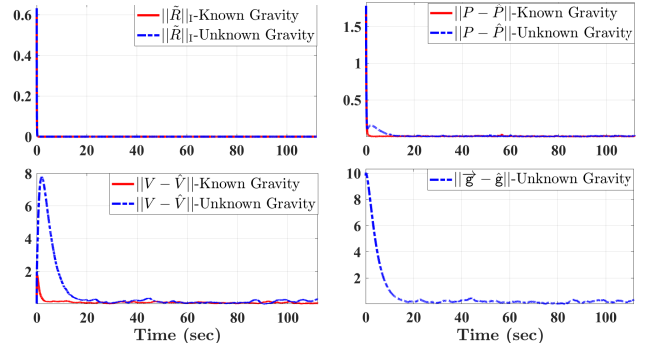}\caption{Error of proposed observers: Known Gravity vs Unknown Gravity.}
	\label{fig:NAV_ERR}
\end{figure}

\section{Conclusion \label{sec:SE3_Conclusion}}

This paper addresses the problem of attitude, position, linear velocity,
and gravity estimation of a vehicle traveling with 6 DoF. Nonlinear
stochastic navigation observers on $\mathbb{SE}_{2}(3)$ has been
proposed. The proposed observers are guaranteed to be almost SGUUB
in the mean square. Experimental results revealed robustness and fast
adaptability of the proposed approach for identification of unknown
pose, linear velocity and gravity. 

\section*{Acknowledgment}

The author would like to thank \textbf{Maria Shaposhnikova} for proofreading
the article.

\subsection*{Appendix\label{subsec:Appendix-A}}
\begin{center}
	\textbf{Quaternion of the Proposed Observers}
	\par\end{center}

\noindent Let $Q=[q_{0},q^{\top}]^{\top}\in\mathbb{S}^{3}$ denote
a unit-quaternion vector where $q_{0}\in\mathbb{R}$ and $q\in\mathbb{R}^{3}$
such that $\mathbb{S}^{3}=\{\left.Q\in\mathbb{R}^{4}\right|||Q||=\sqrt{q_{0}^{2}+q^{\top}q}=1\}$.
The inverse of $Q$ is $Q^{-1}=[\begin{array}{cc}
q_{0} & -q^{\top}\end{array}]^{\top}\in\mathbb{S}^{3}$. Let $\odot$ be a quaternion product such that the quaternion multiplication
of $Q_{1}=[\begin{array}{cc}
q_{01} & q_{1}^{\top}\end{array}]^{\top}\in\mathbb{S}^{3}$ and $Q_{2}=[\begin{array}{cc}
q_{02} & q_{2}^{\top}\end{array}]^{\top}\in\mathbb{S}^{3}$ is $Q_{1}\odot Q_{2}=[q_{01}q_{02}-q_{1}^{\top}q_{2},q_{01}q_{2}+q_{02}q_{1}+[q_{1}]_{\times}q_{2}]^{\top}$.
The mapping from unit-quaternion to $\mathbb{SO}\left(3\right)$ is
$\mathcal{R}_{Q}:\mathbb{S}^{3}\rightarrow\mathbb{SO}\left(3\right)$
\begin{align}
\mathcal{R}_{Q} & =(q_{0}^{2}-||q||^{2})\mathbf{I}_{3}+2qq^{\top}+2q_{0}\left[q\right]_{\times}\in\mathbb{SO}\left(3\right)\label{eq:NAV_Append_SO3}
\end{align}
The quaternion identity is $Q_{{\rm I}}=[\pm1,0,0,0]^{\top}$ where
$\mathcal{R}_{Q_{{\rm I}}}=\mathbf{I}_{3}$, see \eqref{eq:NAV_Append_SO3}.
For more details, visit \cite{hashim2019AtiitudeSurvey}. $\hat{Q}=[\hat{q}_{0},\hat{q}^{\top}]^{\top}\in\mathbb{S}^{3}$
is the estimate of $Q=[q_{0},q^{\top}]^{\top}\in\mathbb{S}^{3}$ such
that $\mathcal{R}_{\hat{Q}}=(\hat{q}_{0}^{2}-||\hat{q}||^{2})\mathbf{I}_{3}+2\hat{q}\hat{q}^{\top}+2\hat{q}_{0}\left[\hat{q}\right]_{\times}\in\mathbb{SO}\left(3\right)$.
Considering gravity estimation, the equivalent quaternion representation
of the observer in \eqref{eq:NAV_Filter1_Detailed} and \eqref{eq:SLAM_Gravity}
is as follows:
\[
\begin{cases}
\tilde{y}_{i} & =p_{i}-\mathcal{R}_{\hat{Q}}y_{i}-\hat{P}\\
\Phi_{q} & =M\tilde{R}=\sum_{i=1}^{n}s_{i}\left(p_{i}-p_{c}\right)y_{i}^{\top}\mathcal{R}_{\hat{Q}}^{\top}\\
\boldsymbol{\Upsilon}(\Phi_{q}) & =\mathbf{vex}(\boldsymbol{\mathcal{P}}_{a}(\Phi_{q}))\\
{\rm v}_{q} & =\tilde{R}^{\top}\tilde{P}_{\varepsilon}=\frac{1}{s_{T}}\sum_{i=1}^{n}s_{i}\tilde{y}_{i}\\
||M\tilde{R}||_{{\rm I}} & =\frac{1}{4}{\rm Tr}\{M-\Phi_{q}\}\\
\Theta_{m}= & \left[\begin{array}{cc}
0 & -\Omega_{m}^{\top}\\
\Omega_{m} & -[\Omega_{m}]_{\times}
\end{array}\right],\hspace{1em}\Psi=\left[\begin{array}{cc}
0 & -w_{\Omega}^{\top}\\
w_{\Omega} & [w_{\Omega}]_{\times}
\end{array}\right]\\
\dot{\hat{Q}} & =\frac{1}{2}\Theta_{m}\hat{Q}-\frac{1}{2}\Psi\hat{Q}\\
\dot{\hat{P}} & =\hat{V}-[w_{\Omega}]_{\times}\hat{P}-w_{V}\\
\dot{\hat{V}} & =\mathcal{R}_{\hat{Q}}a_{m}-[w_{\Omega}]_{\times}\hat{V}-w_{a}\\
w_{\Omega} & =-k_{w}(||M\tilde{R}||_{{\rm I}}+1)\boldsymbol{\Upsilon}(\Phi_{q})\\
& -\frac{1}{4}\frac{||M\tilde{R}||_{{\rm I}}+2}{||M\tilde{R}||_{{\rm I}}+1}\mathcal{R}_{\hat{Q}}{\rm diag}(\mathcal{R}_{\hat{Q}}^{\top}\boldsymbol{\Upsilon}(\Phi_{q}))\hat{\sigma}\\
w_{V} & =[p_{c}]_{\times}w_{\Omega}-k_{v}{\rm v}_{q}\\
w_{a} & =-\hat{\mathtt{g}}-k_{a}{\rm v}_{q}\\
\dot{\hat{\mathtt{g}}} & =-[w_{\Omega}]_{\times}\hat{\mathtt{g}}+\mu\gamma_{g}{\rm v}_{q}\\
k_{R} & =(\gamma_{\sigma}/8)(||M\tilde{R}||_{{\rm I}}+2)\exp(||M\tilde{R}||_{{\rm I}})\\
\dot{\hat{\sigma}}_{\Omega} & =k_{R}{\rm diag}(\mathcal{R}_{\hat{Q}}^{\top}\boldsymbol{\Upsilon}(\Phi_{q}))\mathcal{R}_{\hat{Q}}^{\top}\boldsymbol{\Upsilon}(\Phi_{q})-k_{\sigma}\gamma_{\sigma}\hat{\sigma}
\end{cases}
\]

\bibliographystyle{IEEEtran}
\bibliography{bib_Navigation}

\end{document}